\numberwithin{equation}{section}
\newcounter{thMM}
\newcounter{leMM}
\newcounter{deFF}
\newcounter{exMP}
\newcounter{prOP}
\newcounter{coRR}
\newenvironment{theorem}[1][Theorem]{\refstepcounter{thMM}\trivlist
   \item[\hskip19pt{\bf #1~\arabic{thMM}.}]\it\hskip3pt}{\endtrivlist}
\newenvironment{lemma}[1][Lemma]{\refstepcounter{leMM}\trivlist
   \item[\hskip19pt{\bf #1~\arabic{leMM}.}]\it\hskip3pt}{\endtrivlist}
\newenvironment{definition}[1][Definition]{\refstepcounter{deFF}\trivlist
   \item[\hskip19pt{\bf #1~\arabic{deFF}.}]\rm\hskip3pt}{\endtrivlist}
\newenvironment{example}[1][Example]{\refstepcounter{exMP}\trivlist
   \item[\hskip19pt{\bf #1~\arabic{exMP}.}]\rm\hskip3pt}{\endtrivlist}
\newenvironment{proposition}[1][Proposition]{\refstepcounter{prOP}\trivlist
   \item[\hskip19pt{\bf #1~\arabic{prOP}.}]\it\hskip3pt}{\endtrivlist}
\newenvironment{corollary}[1][Corollary]{\refstepcounter{coRR}\trivlist
   \item[\hskip19pt{\bf #1~\arabic{coRR}.}]\it\hskip3pt}{\endtrivlist}
\newenvironment{proof}[1][Proof]{\begin{trivlist}
\item[\hskip \labelsep {\bfseries #1}] }{ \begin{flushright}$\square$\end{flushright}\end{trivlist}}
\newenvironment{remark}[1][Remark]{\begin{trivlist}
\item[\hskip \labelsep {\bfseries #1}]}{\end{trivlist}}
\newcommand{\Dleft}{[\hspace{-1.5pt}[}
\newcommand{\Dright}{]\hspace{-1.5pt}]}
\newcommand{\SN}[1]{\Dleft #1 \Dright}
\newcommand{\Q}{\mathcal{Q}}
\newcommand{\D}{\mathbbmss{D}}
\DeclareMathOperator{\Vect}{Vect}
\DeclareMathOperator{\w}{w}
\DeclareMathOperator{\proj}{proj}
\begin{document}
\bibliographystyle{plain}

\author{Andrew James Bruce\\ \small \emph{Pembrokeshire College},\\
\small \emph{Haverfordwest, Pembrokeshire},\\\small  \emph{SA61 1SZ, UK}\\\small email:\texttt{andrewjamesbruce@googlemail.com}}
\date{\today}
\title{Odd Jacobi manifolds: general theory and applications to  generalised Lie algebroids}
\maketitle

\begin{abstract}
In this paper we define  a Grassmann odd analogue of  Jacobi structure on a  supermanifold. The basic properties are explored. The  construction of odd Jacobi manifolds is then used to reexamine the notion of a Jacobi algebroid. It is shown that Jacobi algebroids can be understood in terms of a kind of \emph{curved} Q-manifold, which we will refer to as a quasi Q-manifold.
\end{abstract}

\begin{small}
\noindent \textbf{MSC (2010)}: 17B70,53D10, 53D17, 58A50, 83C47.\\
\noindent \textbf{Keywords}: supermanifolds, Jacobi structures, Lie algebroids, Q-manifolds.
\end{small}

\tableofcontents

\section{Introduction}\label{sec:Introduction}
Lichnerowicz \cite{Lichnerowicz1977} introduced the notion of a Poisson manifold as well as a Jacobi manifold. Such  manifolds have found applications in classical mechanics and play an important role in quantisation. In this paper we introduce the notion of  (Grassmann) odd Jacobi (super)manifolds. The main motivation for doing so is the fact that odd Poisson brackets, also known as Schouten brackets or in the physics literature as antibrackets have found  important application in the Batalin--Vilkoviski antifield formalism of gauge theories \cite{Batalin:1981jr,Batalin:1984jr}.\\

Let us briefly recall the definition of a (classical or even) Jacobi manifold\footnote{This can all be generalised to supermanifolds, however doing so is inessential for this paper.} as a manifold equipped with a bi-vector $\Gamma$ and a vector field $E$ such that

\begin{equation}\label{eqn:evenjacobi}
\SN{\Gamma, \Gamma} =  2 E \wedge \Gamma  \hspace{30pt} L_{E}\Gamma = \SN{E,\Gamma} =0,
\end{equation}

where $\SN{\bullet, \bullet}$ is the Schouten--Nijenhuis bracket. (There is some leeway here in  signs due to  conventions). Given these structures one builds a Lie algebra on $C^{\infty}(M)$ viz

\begin{equation}
\nonumber \{ f,g \}_{J} = \Gamma(df, dg) + fE(g) - E(f)g.
\end{equation}

Importantly this  Jacobi bracket is not a derivation unless $E =0$, which reduces the theory to that of Poisson geometry. That is if $E \neq 0$ the Leibniz rule is not exactly satisfied but contains a correction proportional to $\{f, \mathbbmss{1}  \}_{J}$.   For a modern review of Jacobi structures, including $\mathbbmss{Z}$-graded versions see \cite{Grabowski2001,Grabowski2003}.\\

In  this paper we  construct a Grassmann odd analogue of Jacobi structures on supermanifolds\footnote{We refrain from calling odd Jacobi structures \emph{Schouten--Jacobi structures} as this name has been already allocated in \cite{Grabowski2001} to mean something specific and not identical to constructions found in this work.}.  \\

Studying  geometric structures on supermanifolds provides a method of confirming the philosophy that \emph{ supermanifolds can  informally be thought of as ``manifolds" with commuting and anticommuting coordinates}. More than this, the inclusion of Grassmann odd degrees of freedom allow one to construct geometric structures that have no classical analogue on manifolds. These ``odd structures" such are of mathematical interest and often find applications in physics. For example, odd symplectic and Schouten structures are at the heart of the  Batalin--Vilkoviski antifield formalism.\\

The theory of Schouten manifolds and Q-manifolds (supermanifolds with a homological vector field) are both well established within the mathematical physics literature. However, construction of an odd analogue of classical Jacobi structures appears to have been overlooked. The expected properties of such a construction cannot be taken for granted.  In this work we preform  an  examination of  the basic properties of odd Jacobi structures.  \\

 The theory of odd Jacobi structures is described by ``almost Schouten structure" $S \in C^{\infty}(T^{*}M)$, that is a Grassmann odd  fibre-wise polynomial of degree two and a homological vector field $Q \in \Vect(M)$, together with natural conditions analogous to Eqn.(\ref{eqn:evenjacobi}). The associated odd Jacobi brackets   satisfy the standard properties of  Schouten  brackets, with the exception of the derivation property.\\

 A large proportion  of this paper is devoted to making the previous statement concrete and exploring the elementary properties of supermanifolds equipped with odd Jacobi structures.  We then proceed to use this technology to reexamine Jacobi algebroids. \\

Jacobi algebroids were first introduced by Iglesias and  Marrero \cite{Iglesias2001} under the name of generalised Lie algebroids. Such structures were the \emph{revisited} by Grabowski and Marmo \cite{Grabowski2001}. Jacobi algebroids represent a nice generalistion of the concept of a Lie algebroid. Indeed, Jacobi algebroids can be understood as Lie algebroids in the presence of a 1-cocycle \cite{Iglesias2001}. Recall the notion of a Lie algebroid as a vector bundle $E \rightarrow M$ equipped with a Lie bracket on the sections $[\bullet, \bullet]: \Gamma(E) \otimes \Gamma(E) \rightarrow \Gamma(E)$ together with an anchor $a: \Gamma(E) \rightarrow \Gamma(TM)$ that satisfy the Leibniz rule

\begin{equation}
\nonumber [u,fv] = a(u)[f] \: v  +  (-1)^{\widetilde{u}\widetilde{f}} f [u,v],
\end{equation}

where $u,v \in \Gamma(E)$ and $f \in C^{\infty}(M)$. The Leibniz rule implies that the anchor is actually a Lie algebra morphism: $a\left([u,v]\right) = [a(u), a(v)]$.  A Lie algebroid  can also  be understood in terms of:

\begin{enumerate}
\item a homological vector field of weight one on the total space of $\Pi E$. \label{homvect}
\item a weight minus one Poisson structure on the total space of $E^{*}$.
\item a weight minus one Schouten structure on the total space of $\Pi E^{*}$. \label{schouten}
\end{enumerate}

Here $\Pi$ is the parity reversion functor which shifts the Grassmann parity of the fibre coordinates, but does not effect the base coordinates. Also the parity reversion functor does not effect the assignment of the weights.\\

At this point we must remark that we will be working the the category of graded manifolds when discussing algebroids. That is  we will be working with supermanifolds equipped with a privileged class of atlases where the coordinates are assigned weights taking values in $\mathbbmss{Z}$ and the coordinate transformations are polynomial in coordinates with nonzero weights respecting the  weight. Generally the weight  will be independent of the Grassmann parity. Moreover, any sign factors that arise will be due to the Grassmann parity and we do not include any possible extra signs due to the weight. In simpler terms, we have a manifold equipped with a  distinguished class of charts and diffeomorphisms between them respecting the $\mathbbmss{Z}_{2}$-grading as well as the additional $\mathbbmss{Z}$-grading. These gradings then pass over to geometric objects (tensor and tensor-like objects) on graded manifolds. For further details about graded manifolds one can consult  \cite{Grabowski2009,Roytenberg:2001,Voronov:2001qf}. \\

Jacobi algebroids c.f. \cite{Grabowski2001,Iglesias2001} are understood in terms of a weight minus one even Jacobi bracket on $C^{\infty}(E^{*})$ for a given vector bundle $E \rightarrow M$. By  an even Jacobi bracket we mean a Poisson-like bracket in which the Leibnitz rule is weakened in  very specific way. We say more about this shortly.\\

In this paper we address the ``odd" approach to Jacobi algebroids generalising \ref{homvect}. and \ref{schouten}. of the above list mimicking the odd-super constructions related to Lie algebroids.  We shall show that Jacobi algebroids can be described in terms of an odd vector field  $\D \in \Vect(\Pi E)$ of weight one that satisfies the condition that $[\D, \D] = q \: \D$ for some odd zero form $q \in C^{\infty}(\Pi E)$. \\

The description of Lie algebroids in terms of homological vector fields is due to Va$\breve{\textrm{{\i}}}$ntrob \cite{Vaintrob:1997}. The deep links between Poisson geometry and Lie algebroids can be traced back to Coste, Dazord \& Weinstein \cite{Coste1987}. \\

\begin{remark}
Antunes and Laurent--Gengoux \cite{Antunes2011} studied Jacobi structures using the supergeometric formulation of \emph{Buttin's big bracket}. They use this formulism to efficiently describe Jacobi algebroids and Jacobi bialgebroids. The supergeometric approach of the big bracket is different, but certainly related to that presented here. Grabowski  \cite{Grabowski2011}, just after this work was completed, provided a systematic and quite general approach to contact and Jacobi structures on graded supermanifolds. Also note the work of Mehta \cite{Mehta2011}, which we will comment on later.
\end{remark}

\noindent \textbf{Preliminaries} \\
All vector spaces and algebras will be $\mathbbmss{Z}_{2}$-graded.   We will generally  omit the prefix \emph{super}. By \emph{manifold} we will mean a \emph{smooth supermanifold}. We denote the Grassmann parity of an object by \emph{tilde}: $\widetilde{A} \in \mathbbmss{Z}_{2}$. By \emph{even} or \emph{odd} we will be referring explicitly to the Grassmann parity.\\

 A \emph{Poisson} $(\varepsilon = 0)$  or \emph{Schouten} $(\varepsilon = 1)$ \emph{algebra} is understood as a vector space $A$ with a bilinear associative multiplication and a bilinear operation $\{\bullet , \bullet\}: A \otimes A \rightarrow A$ such that:
\begin{list}{}
\item \textbf{Grading} $\widetilde{\{a,b \}_{\varepsilon}} = \widetilde{a} + \widetilde{b} + \varepsilon$
\item \textbf{Skewsymmetry} $\{a,b\}_{\varepsilon} = -(-1)^{(\tilde{a}+ \varepsilon)(\tilde{b}+ \varepsilon)} \{b,a \}_{\varepsilon}$
\item \textbf{Jacobi Identity} $\displaystyle\sum\limits_{\textnormal{cyclic}(a,b,c)} (-1)^{(\tilde{a}+ \varepsilon)(\tilde{c}+ \varepsilon)}\{a,\{b,c\}_{\varepsilon}  \}_{\varepsilon}= 0$
\item \textbf{Leibniz Rule} $\{a,bc \}_{\varepsilon} = \{a,b \}_{\varepsilon}c + (-1)^{(\tilde{a} + \varepsilon)\tilde{b}} b \{a,c \}_{\varepsilon}$
\end{list} \vspace{10pt}
for all homogenous elements $a,b,c \in A$.\\

If the Leibniz rule does not hold identically, but is modified as
\begin{equation}
\{a,bc \}_{\varepsilon} = \{a,b \}_{\varepsilon}c + (-1)^{(\tilde{a} + \varepsilon)\tilde{b}} b \{a,c \}_{\varepsilon} - \{a ,\mathbbmss{1}  \} bc,
\end{equation}

then we have even ($\epsilon = 0)$ or odd ($\epsilon = 1)$ \emph{Jacobi algebras}.\\

A manifold $M$ such that $C^{\infty}(M)$ is a Poisson/Schouten algebra is known as a \emph{Poisson/Schouten manifold}. In particular the cotangent of a manifold comes equipped with a canonical Poisson structure.\\

Let us employ   natural local coordinates $(x^{A}, p_{A})$ on $T^{*}M$, with $\widetilde{x}^{A} = \widetilde{A}$ and $\widetilde{p}_{A} = \widetilde{A}$. Local diffeomorphisms on $M$ induce vector  bundle automorphism on $T^{*}M$ of the form
\begin{equation}
\nonumber \overline{x}^{A} = \overline{x}^{A}(x), \hspace{30pt} \overline{p}_{A}  = \left(\frac{\partial x^{B}}{\partial \overline{x}^{A}}\right)p_{B}.
\end{equation}

We will in effect use the local description as a \emph{natural vector bundle} to define the cotangent bundle of a supermanifold.  The canonical Poisson bracket on the cotangent is given by

\begin{equation}
\{ F,G \} = (-1)^{\widetilde{A} \widetilde{F} + \widetilde{A}} \frac{\partial F}{\partial p_{A}}\frac{\partial G}{\partial x^{A}} - (-1)^{\widetilde{A}\widetilde{F}}\frac{\partial  F}{\partial x^{A}} \frac{\partial G}{\partial p_{A}}.
\end{equation}\\

A manifold equipped with an odd vector field $Q$, such that the non-trivial condition $Q^{2}= \frac{1}{2}[Q,Q]=0$ holds, is known as a \emph{Q-manifold}. The vector field $Q$ is known as a \emph{homological vector field} for obvious reasons. \\

The structure of a Schouten manifold is encoded in an ``odd Hamiltonian" $\bar{S} \in C^{\infty}(T^{*}M)$ which Poisson self-commutes, $\{\bar{S}, \bar{S} \}=0$. The associated Schouten bracket is given by
\begin{equation}
\nonumber \SN{f,g} = (-1)^{\widetilde{f}+1}\{ \{ \bar{S}, f \},g \},
\end{equation}

with $f,g \in C^{\infty}(M)$.\\

A Schouten manifold equipped with a homological vector field such that $L_{\bar{Q}}\bar{S} =0$ is known as a \emph{QS-manifold}, see \cite{Voronov:2001qf} and for the odd symplectic case see \cite{Alexandrov:1995kv}. Importantly the homological vector field is a derivation over the Schouten bracket.\\

\begin{definition}\label{def:quasiQ}
The triple $(M, \D, q)$ with  $M$ a manifold,  $\D \in \Vect(M)$ an odd vector field and  $q \in C^{\infty}(M)$ an odd function such that:

\begin{equation}
  \D^{2} = \frac{1}{2}[\D, \D] =  q \D, \hspace{15pt}\textnormal{and}\hspace{15pt}  \D[q] =0,
\end{equation}

shall be called a \textbf{quasi Q-manifold}. The vector field $\D$ shall be known as an \textbf{almost homological vector field}. The odd function $q$ shall be known as the \textbf{curving function}.
\end{definition}

\begin{definition}
Let $(M, \D_{M}, q_{M})$ and $(N, \D_{N}, q_{N})$ be quasi Q-manifolds and let $\phi: M \rightarrow N$ be  a smooth map. Then $\phi$ is said to be a \textbf{morphism of quasi Q-manifolds} if and only if
\begin{enumerate}
\item $\D_{M}\left( \phi^{*}f \right) = \phi^{*}\left(\D_{N}f \right)$ for all $f \in C^{\infty}(N)$. That is the almost homological vector fields are $\phi$-related.
\item $\phi^{*}q_{N} = q_{M}$. That is the curving functions match.
\end{enumerate}
\end{definition}

Quasi Q-manifolds and their morphisms form a category. Also note that if $q=0$ then we have the category of \emph{Q-manifolds} and $\D$ is a \emph{homological vector field}, \cite{Alexandrov:1995kv}. The curving functions ``measure" the failure of the homological condition of $\D$ and thus represent a kind of ``curvature". The other extreme is to set $\D =0$ and then keep $q$ as some distinguished odd function. For example, one could consider \emph{(higher) Schouten manifolds} as examples of quasi Q-manifolds. The far extreme is the trivial structure  of $\D =0$ and $q=0$ and we recover the full category of supermanifolds.\\

\section{General Theory}\label{sect:generaltheory}
\subsection{Odd Jacobi structures}\label{subsect:oddjacobi}
In this section we define odd Jacobi manifolds and show that much of the theory of classical Jacobi manifolds carries over to the odd case. One should of course keep in mind the similarities and difference with Schouten manifolds. To some extent many of the results here are known to experts, though they appear not to have been written down in one place. \\
\begin{definition}
An \textbf{odd Jacobi structure} $(S,Q)$ on a manifold $M$  consists of
\begin{itemize}
\item an odd function $S \in C^{\infty}(T^{*}M)$, of degree two in fibre coordinates,
\item an odd vector field $Q \in \Vect(M)$,
\end{itemize}
such that the following conditions hold:
\begin{enumerate}
\item the homological condition $Q^{2} = \frac{1}{2} [Q,Q]=0$,
\item the invariance condition  $L_{Q}S = 0$,
\item the compatibility condition $\{S,S \}= - 2 \Q S $,
\end{enumerate}
 Here $\Q \in C^{\infty}(T^{*}M)$ is the principle symbol or ``Hamiltonian"  of the vector field $Q$. The brackets $\{ \bullet, \bullet \}$ are the canonical Poisson brackets on the cotangent bundle of the manifold.
\end{definition}

\begin{remark}
Note  $[E,E]=0$ automatically for the even vector field $E$ in the definition of an even or classical Jacobi structure. For odd structures this is a non-trivial condition.  Specifically, the underlying manifold $M$ is in fact a Q-manifold in the odd case.
\end{remark}
Note that the above conditions 1. and 2.  can be written entirely in terms of $S$ and $\Q$ and the canonical Poisson bracket as
\renewcommand{\labelenumi}{$\arabic{enumi}^{\prime}$.}
\begin{enumerate}
\item $\{\Q, \Q \} =0$,
\item $\{ \Q, S \}=0$,
\end{enumerate}
\renewcommand{\labelenumi}{\arabic{enumi}.}
which will be very convenient for calculational purposes.\\

\begin{remark}
Note that a QS-manifold can also be understood in a very similar way by taking the principle symbol of the homological vector field. In particular we have the triple $(M, \bar{S}, \bar{\Q})$, such that \\

\begin{tabular}{lll}
$\{ \bar{S}, \bar{S}\} =0$,&
$ \{\bar{\Q}, \bar{S} \}=0$,&
$ \{ \bar{Q}, \bar{Q} \} =0$.
\end{tabular}
\end{remark}

In natural local coordinates $(x^{A}, p_{A})$ on $T^{*}M$ the odd Jacobi structure is given by

\begin{equation}
 S = \frac{1}{2!}S^{AB}(x) p_{B}p_{A},\hspace{15pt} \textnormal{and}  \hspace{15pt} \Q = Q^{A}(x)p_{A},
\end{equation}

 the homological vector field is in local coordinates given by $Q = Q^{A}\frac{\partial}{\partial x^{A}}$. In local coordinates the conditions on the structures can be written as

\begin{eqnarray}
\nonumber \{ \Q, \Q \} &=& 2 Q^{B}\frac{\partial Q^{A}}{\partial x^{B}} p_{A} = 0,\\
\nonumber \{\Q , S  \} &=& \left(\frac{1}{2} Q^{C}\frac{\partial S^{BA}}{\partial x^{D}} + (-1)^{\widetilde{B}} S^{BC}\frac{\partial Q^{A}}{\partial x^{c}}  \right)p_{A}p_{B}=0,\\
\nonumber \{ S,S \}+ 2 \Q S &=& (-1)^{\widetilde{C}}\left( S^{CD} \frac{\partial S^{BA}}{\partial x^{D}} + Q^{C}S^{BA}  \right)p_{A}p_{B}p_{C} = 0.
\end{eqnarray}

\begin{definition}
A manifold equipped with an odd Jacobi structure $(S,Q)$ shall be known as an \textbf{odd Jacobi manifold}.
\end{definition}

As we shall see, the algebra of smooth functions $C^{\infty}(M)$ of an odd Jacobi manifold is in fact an odd Jacobi algebra. Following the natural analogue of Lichnerowicz's constructions we have the following definition:

\begin{definition}
The \textbf{odd Jacobi bracket} on $C^{\infty}(M)$ is defined as
\begin{eqnarray}
\SN{f,g}_{J} &=&  (-1)^{\widetilde{f}+1} \{ \{ S,f \},g    \} - (-1)^{\widetilde{f}+1} \{ \Q, fg \}\\
\nonumber &=&(-1)^{(\widetilde{B}+1)\widetilde{f}  +1} S^{BA} \frac{\partial f}{\partial x^{A}} \frac{\partial g}{\partial x^{B}} + (-1)^{\widetilde{f}} \left(Q^{A} \frac{\partial f}{\partial x^{A}}   \right)g  + f \left( Q^{A}\frac{\partial g}{\partial x^{A}}  \right),
\end{eqnarray}
with $f,g \in C^{\infty}(M)$.
\end{definition}

\begin{theorem}\label{theorem:oddjacobialgebra}
The odd Jacobi bracket defines an odd Jacobi algebra on $C^{\infty}(M)$. That is  the odd Jacobi bracket has the following properties:
\begin{enumerate}
\item Symmetry:  $\SN{f,g}_{J} = - (-1)^{(\widetilde{f}+1)(\widetilde{g}+1)} \SN{g,f}_{J}$.
\item Jacobi identity:  $\displaystyle\sum\limits_{\textnormal{cyclic}(f,g,h)} (-1)^{(\widetilde{f}+1)(\widetilde{h}+1)} \SN{f, \SN{g,h}_{J}}_{J}=0$.\\
\item Generalised Leibniz rule: $\SN{f, gh}_{J} = \SN{f,g}_{J}h + (-1)^{(\widetilde{f}+1) \widetilde{g}} g \SN{f,h}_{J} - \SN{f, \mathbbmss{1}}_{J}gh$.
\end{enumerate}
\end{theorem}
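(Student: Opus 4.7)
The plan is to decompose the bracket as $\SN{f,g}_J = \SN{f,g}_S + A(f,g)$, where $\SN{f,g}_S := (-1)^{\widetilde{f}+1}\{\{S,f\},g\}$ is the Kosmann--Schwarzbach derived bracket built from $S$, and $A(f,g) := -(-1)^{\widetilde{f}+1}\{\Q, fg\}$ is the $\Q$--correction. Each of the three properties can then be verified on the two summands separately, with cross--cancellation occurring only in the Jacobi identity.

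For the graded symmetry, I would apply the graded Jacobi identity of the canonical Poisson bracket to $\{\{S,f\},g\}$. Since $f,g \in C^{\infty}(M)$ do not depend on the fibre coordinates $p_A$, their canonical bracket $\{f,g\}$ vanishes, killing one of the two terms in Jacobi; what remains is proportional to $\{f,\{S,g\}\}$, which after graded antisymmetry becomes $(-1)^{\widetilde{f}\widetilde{g}}\{\{S,g\},f\}$. A straightforward sign bookkeeping against the prefactor $(-1)^{\widetilde{f}+1}$ produces the claimed skew--symmetry of $\SN{\bullet,\bullet}_S$. The correction $A(f,g)$ is manifestly symmetric in $fg$ up to the Koszul sign from swapping $f$ and $g$, so the overall sign for $A$ matches.

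The generalised Leibniz rule is essentially a bookkeeping exercise. The Schouten summand is a classical derived bracket: since $\{S,f\}$ is first order in the momenta, it acts as a graded derivation on $C^{\infty}(M)$, so $\SN{f,gh}_S = \SN{f,g}_S\,h + (-1)^{(\widetilde{f}+1)\widetilde{g}}g\,\SN{f,h}_S$ holds exactly. A direct expansion of $A(f,gh)$ using the Leibniz rule for the canonical Poisson bracket, together with $\{\Q,h\} = Q[h]$, produces
\begin{equation*}
A(f,gh) = A(f,g)h + (-1)^{(\widetilde{f}+1)\widetilde{g}} g\,A(f,h) - (-1)^{\widetilde{f}} Q[f]\,gh.
\end{equation*}
Evaluating both summands on the constant $\mathbbmss{1}$ shows $\SN{f,g}_S$ annihilates constants and $A(f,\mathbbmss{1}) = (-1)^{\widetilde{f}}Q[f]$, so $\SN{f,\mathbbmss{1}}_J = (-1)^{\widetilde{f}}Q[f]$, and the residual term in the display above is exactly $-\SN{f,\mathbbmss{1}}_J\,gh$.

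The graded Jacobi identity is the hard part, and it is precisely where the three defining conditions of an odd Jacobi structure are used. My strategy is to expand the cyclic sum $\sum (-1)^{(\widetilde{f}+1)(\widetilde{h}+1)}\SN{f,\SN{g,h}_J}_J$ by writing everything in terms of iterated canonical Poisson brackets on $T^*M$, and then repeatedly applying graded Jacobi to collect inner brackets onto $S$ and $\Q$. The terms quadratic in $S$ assemble, via Jacobi, into expressions involving $\{S,S\}$, which by the compatibility condition $\{S,S\} = -2\Q S$ convert into mixed $S$--$\Q$ contributions; these combine with the genuinely mixed terms, which are tamed by the invariance condition $\{\Q,S\}=0$; the purely $\Q$ contributions vanish by the homological condition $\{\Q,\Q\}=0$. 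After the cyclic summation with the prescribed signs, everything cancels pairwise. Conceptually this is the Kosmann--Schwarzbach derived--bracket argument: the obstruction $\{S,S\} \neq 0$ to $\SN{\bullet,\bullet}_S$ satisfying Jacobi on its own is cancelled exactly by the Jacobi obstruction of the $\Q$--correction, and the three axioms of an odd Jacobi structure are precisely what make this cancellation work.
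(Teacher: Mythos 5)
Your proposal is correct, but it proves the theorem by a genuinely different route from the paper. The paper works entirely in natural local coordinates: the symmetry is read off from $S^{AB}=(-1)^{\widetilde{A}\widetilde{B}}S^{BA}$, the generalised Leibniz rule is a direct expansion identifying the anomaly with $(-1)^{\widetilde{f}}Q^{A}\partial f/\partial x^{A}\,gh=\SN{f,\mathbbmss{1}}_{J}gh$, and --- the key labour-saving device --- the Jacobi identity is reduced by polarisation to the single identity $\SN{f,\SN{f,f}_{J}}_{J}=0$ for an arbitrary \emph{even} $f$ (legitimate because an odd, graded-skew bracket is determined by its even diagonal), after which the coefficients of $\partial f/\partial x$, collected order by order, are exactly the local-coordinate forms of the three axioms. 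You instead run the coordinate-free derived-bracket argument: split the bracket into $\SN{\cdot,\cdot}_{S}$ and the correction $A$, use that $C^{\infty}(M)$ is an abelian subalgebra of $C^{\infty}(T^{*}M)$ under the canonical Poisson bracket for the symmetry, that $\{S,f\}$ is first order in the momenta for the Leibniz rule, and iterated Poisson--Jacobi identities plus the three axioms for the Jacobi identity. Your route is conceptually cleaner, manifestly coordinate-independent, and isolates precisely where each axiom enters (which the paper's computation obscures); what it costs is that the Jacobi-identity step, as written, remains a strategy rather than a finished computation --- the full cyclic sum generates many cross terms between the $S$-part and the $A$-part whose pairwise cancellation you assert but do not display --- and the paper's even-diagonal polarisation trick is worth borrowing there to cut that bookkeeping down to a single representative.
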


\begin{proof} We proceed to prove the above theorem by making use of the local descriptions.\\
\begin{enumerate}
\item The symmetry is clear from the definition given that $S^{AB} = (-1)^{\widetilde{A}\widetilde{B}}S^{BA}$.
\item As the odd Jacobi bracket is odd and skew-antisymmetric it is sufficient to examine the \emph{even diagonal} in proving the Jacobi identity. That is we only need to consider $\SN{f,\SN{f,f}_{J}}_{J}=0$ for an arbitrary even function. Thus, via direct computation we have
    \begin{eqnarray}
   \nonumber  \SN{f,\SN{f,f}_{J}}_{J} &=& (-1)^{\widetilde{C}}\left( S^{CD} \frac{\partial S^{BA}}{\partial x^{D}} + Q^{C} S^{BA}  \right)\frac{\partial f}{\partial x^{A}}\frac{\partial f}{\partial x^{B}}\frac{\partial f}{\partial x^{C}}\\
    \nonumber &-& f \left( 2 (-1)^{\widetilde{B}} S^{BC}\frac{\partial Q^{A}}{\partial x^{C}} + Q^{C} \frac{\partial S^{BA}}{\partial x^{C}}  \right)\frac{\partial f}{\partial x^{A}}\frac{\partial f}{\partial x^{B}}\\
    \nonumber &+& 2 f^{2}Q^{B}\frac{\partial Q^{A}}{\partial x^{B}}\frac{\partial f}{\partial x^{A}},
    \end{eqnarray}
collecting terms order by order in ``$\frac{\partial f}{\partial x}$".   Note that all terms involving higher order derivatives exactly cancel. Then we see that $\SN{f,\SN{f,f}_{J}}_{J}=0$ given the conditions on $S$ and $Q$ to form an odd Jacobi structure on $M$. Thus, the Jacobi identity is satisfied.
\item Via direct computation in local coordinates it is easy to see that
\begin{equation}
\nonumber \SN{f,gh}_{J} = \SN{f,g}_{J} h + (-1)^{(\widetilde{f}+1)\widetilde{g}} g \SN{f,h}_{J} - (-1)^{\widetilde{f}} Q^{A} \frac{\partial f}{\partial x^{A}}   gh.
\end{equation}
Thus the ``anomaly" is given by  $\SN{f, \mathbbmss{1}}_{J}$.
\end{enumerate}
\end{proof}

\begin{definition}
Given a function $f \in C^{\infty}(M)$ the associated \textbf{Hamiltonian vector field} is given by
\begin{eqnarray}
f &\rightsquigarrow & X_{f} \in \Vect(M)\\
\nonumber X_{f}(g) &=& (-1)^{\widetilde{f}} \SN{f,g}_{J} - Q(f) g.
\end{eqnarray}
\end{definition}

In natural local coordinates the Hamiltonian vector field of a function $f$ is

\begin{equation}
X_{f} = (-1)^{\widetilde{A}\widetilde{f}+1} S^{AB}\frac{\partial f}{\partial x^{B}} \frac{\partial }{\partial x^{A}} + (-1)^{\widetilde{f}}f Q^{A}\frac{\partial}{\partial x^{A}}.
\end{equation}

We will explore the properties of Hamiltonian vector fields later. Before we do this, let us examine morphisms of odd Jacobi manifolds.\\

\begin{definition}
Let $(M_{1}, S_{1}, Q_{1})$ and $(M_{2}, S_{2}, Q_{2})$ be odd Jacobi manifolds. Then a smooth map
\begin{equation}
\phi: M_{1} \rightarrow M_{2},
\end{equation}
is said to be an \textbf{odd Jacobi morphism} if and only if
\begin{equation}
\phi^{*}\SN{f,g}_{J_{2}} = \SN{\phi^{*}f , \phi^{*}g}_{J_{1}},
\end{equation}
for all $f,g \in C^{\infty}(M_{2})$.
\end{definition}

In other words, a morphism $\phi: M_{1} \rightarrow M_{2}$ is an odd Jacobi morphism if the associated pull-back morphism  is a homomorphism of odd Lie algebras.  Odd Jacobi manifolds form a category under composition of odd Jacobi morphisms.\\

\begin{proposition}
Let $(M, S, Q)$ be an odd Jacobi manifold and   $\phi : M \rightarrow M$ a diffeomorphism. Then the following are equivalent:
\begin{enumerate}
\item $\phi$ is an odd Jacobi (auto)morphism;  $\phi^{*} \SN{f,g}_{J} = \SN{\phi^{*}f, \phi^{*}g}_{J} $.
\item $\phi^{*}S = S$  and $\phi^{*}\Q = \Q$.
\item $X_{f}$ and $X_{\phi^{*}f}$ are $\phi$-related.
\end{enumerate}
\end{proposition}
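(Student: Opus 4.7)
The plan is to establish $(2) \Rightarrow (1)$, then $(1) \Rightarrow (2)$, and finally the two-way equivalence $(2) \Leftrightarrow (3)$. Two standard facts are central. First, any diffeomorphism $\phi: M \to M$ lifts canonically to a symplectomorphism of $T^{*}M$, so $\phi^{*}\{F,G\} = \{\phi^{*}F, \phi^{*}G\}$ for all $F,G \in C^{\infty}(T^{*}M)$, and this lift sends the momentum symbol of a vector field $Q$ to the momentum symbol of $\phi^{-1}_{*}Q$. Second, the identity $\SN{f, \mathbbmss{1}}_{J} = (-1)^{\widetilde{f}} Q(f)$, which is immediate from the generalised Leibniz rule of Theorem~\ref{theorem:oddjacobialgebra} by taking $g = h = \mathbbmss{1}$.

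For $(2) \Rightarrow (1)$ I would simply apply $\phi^{*}$ to the defining formula of the odd Jacobi bracket, pull it through the canonical Poisson bracket via the first fact, and substitute $\phi^{*}S = S$ together with $\phi^{*}\Q = \Q$. For $(1) \Rightarrow (2)$, setting $g = \mathbbmss{1}$ in the odd Jacobi morphism condition and invoking the second fact yields $\phi^{*}(Q(f)) = Q(\phi^{*}f)$ for every $f$, i.e.\ $Q$ is $\phi$-related to itself; lifting to $T^{*}M$, this is equivalent to $\phi^{*}\Q = \Q$. Subtracting the $\Q$-dependent term from the Jacobi bracket then shows that the pure Schouten bracket $\SN{f,g}_{S} := (-1)^{\widetilde{f}+1}\{\{S,f\},g\}$ is preserved by $\phi^{*}$, and since a Schouten bracket determines its generating odd fibrewise-quadratic Hamiltonian on $T^{*}M$ uniquely, this forces $\phi^{*}S = S$.

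For $(2) \Leftrightarrow (3)$ I would rewrite the statement that $X_{f}$ and $X_{\phi^{*}f}$ are $\phi$-related, using the definition $X_{f}(g) = (-1)^{\widetilde{f}}\SN{f,g}_{J} - Q(f)g$, as the single identity
$$(-1)^{\widetilde{f}}\bigl(\phi^{*}\SN{f,g}_{J} - \SN{\phi^{*}f, \phi^{*}g}_{J}\bigr) = \bigl(\phi^{*}(Q(f)) - Q(\phi^{*}f)\bigr)\,\phi^{*}g.$$
Under (2), both sides vanish by (1) together with the equivalence $\phi^{*}\Q = \Q \Leftrightarrow Q \sim_{\phi} Q$, so (3) follows. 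Conversely, specialising (3) to $f = \mathbbmss{1}$, where $X_{\mathbbmss{1}} = Q$ by the easy computation $\SN{\mathbbmss{1},g}_{J} = Q(g)$, forces $\phi^{*}\Q = \Q$; the displayed equation then collapses to (1), and the $(1) \Rightarrow (2)$ step already handled delivers $\phi^{*}S = S$.

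The only real technical chore is the sign bookkeeping involved in the canonical lift of $\phi$ to $T^{*}M$ and its effect on the symbol $\Q$, but this is a textbook feature of natural lifts and introduces no conceptual difficulty. The substantive insight driving the proof is that the Leibniz anomaly $\SN{\cdot, \mathbbmss{1}}_{J}$ cleanly isolates the vector field $Q$, allowing an odd Jacobi structure to be split into a Schouten piece (governed by $S$) and a first-order piece (governed by $\Q$) that are preserved independently by any odd Jacobi automorphism.
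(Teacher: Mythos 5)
Your proof is correct and follows the paper's own strategy --- lift $\phi$ to a symplectomorphism of $T^{*}M$ and compute directly with the defining formulae --- but it is actually more complete than the paper's: the paper dispatches $(1)\Rightarrow(2)$ with ``via 1.\ we obtain 2.'' and $(3)\Rightarrow(2)$ with ``follows similarly'', whereas you supply the missing arguments by evaluating at $g=\mathbbmss{1}$ (resp.\ $f=\mathbbmss{1}$, where $X_{\mathbbmss{1}}=Q$) to isolate $\Q$, and then using that the biderivation part of the bracket determines the fibrewise-quadratic Hamiltonian $S$ uniquely. One cosmetic slip: the identity $\SN{f,\mathbbmss{1}}_{J}=(-1)^{\widetilde{f}}Q(f)$ does not follow from the generalised Leibniz rule by taking $g=h=\mathbbmss{1}$ (that substitution reduces to a tautology); it is read off directly from the definition of the odd Jacobi bracket, exactly as the ``anomaly'' is identified in the paper's proof of Theorem~1.
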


\begin{proof}
$1. \Longleftrightarrow 2.$ follows from the fact that pull-back associated with the diffeomorphism $\phi$ is a symplectomorphism on the cotangent bundle and the definition of the odd  Jacobi bracket. Explicitly:
\begin{eqnarray}\nonumber
\phi^{*}\SN{f,g}_{J} &=& \phi^{*}\left( (-1)^{\widetilde{f}+1} \{ \{ S,f \},g \} - (-1)^{\widetilde{f}+1} \{\Q, fg  \} \right)\\
\nonumber &=& (-1)^{\widetilde{f}+1} \{ \{ \phi^{*}S,\phi^{*}f \},\phi^{*}g \} - (-1)^{\widetilde{f}+1} \{\phi^{*}\Q, \phi^{*}(fg)  \}.
\end{eqnarray}
Then via 1. we obtain 2.\\

$2.  \Longleftrightarrow 3.$ follows similarly. Explicitly:

\begin{eqnarray}\nonumber
\phi^{*}(X_{f}(g)) &=& \phi^{*}\left((-1)^{f} \SN{f,g}_{J} - Q(fg)   \right)\\
\nonumber &=& (-1)^{f} \SN{\phi^{*}f,\phi^{*}g}_{J} - \{ \Q, \phi^{*}(fg)  \}
\end{eqnarray}
where we have used  2. Thus,

\begin{equation}\nonumber
 \phi^{*}(X_{f}(g)) = X_{\phi^{*}f}(\phi^{*}g),
 \end{equation}
 Thus $X_{f}$ and $X_{\phi^{*}f}$ are $\phi$-related.

\end{proof}

Taking the nomenclature from classical mechanics, we will say that an odd Jacobi automorphism is a \emph{canonical transformation} with respect to the odd Jacobi bracket.

\begin{definition}
A vector field $X \in \Vect(M)$ is said to be a \textbf{Jacobi vector field} if and only if
\begin{equation}
L_{X}S =  \{ \chi, S \} = 0 \hspace{35pt} \textnormal{and}  \hspace{15pt} L_{X}Q = \{\chi, \Q  \} =  0,
\end{equation}
where $\chi \in C^{\infty}(T^{*}M)$ is the symbol or ``Hamiltonian" of the vector field $X$.
\end{definition}

The above definition is the infinitesimal version of a Jacobi automorphism. Note that the homological vector field $Q$ is a Jacobi vector field. \\

\begin{lemma}\label{lemma:jacobivectorfields}
Let $X \in \Vect(M)$ be a vector field on an odd Jacobi manifold. Then the following are equivalent:
\begin{enumerate}
\item $X$ is a Jacobi vector field.
\item $X$ is a derivation over the odd Jacobi bracket;
\begin{equation}\nonumber
X(\SN{f,g}_{J}) = \SN{X(f), g}_{J} + (-1)^{\widetilde{X}(\widetilde{f}+1)} \SN{f,X(g)}_{J}.
\end{equation}
\item $[X, Y_{f}] = (-1)^{\widetilde{X}} Y_{X(f)}$, for all Hamiltonian vector fields $Y_{f}$.
\end{enumerate}
\end{lemma}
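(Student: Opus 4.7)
The plan is to establish $(1) \Leftrightarrow (2)$ and $(2) \Leftrightarrow (3)$. The main tool for the first equivalence is the symbol $\chi \in C^{\infty}(T^{*}M)$ of $X$: the operator $\{\chi,\cdot\}$ agrees with $X$ on $C^{\infty}(M)$ and with the Lie derivative $L_{X}$ on polynomials in $p$, so that (1) is equivalent to $\{\chi, S\} = 0$ and $\{\chi, \Q\} = 0$. For $(2) \Leftrightarrow (3)$ the key observation is the identity $X_{\Id} = Q$, which is immediate from the local formula for the Hamiltonian vector field.

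For $(1) \Rightarrow (2)$, I would apply $\{\chi,\cdot\}$ to
\[
\SN{f,g}_{J} = (-1)^{\widetilde{f}+1}\{\{S,f\},g\} - (-1)^{\widetilde{f}+1}\{\Q, fg\}
\]
and commute $\chi$ inwards via the graded Jacobi identity of the canonical Poisson bracket on $T^{*}M$. Each commutation yields either a term containing $\{\chi, S\}$ or $\{\chi, \Q\}$, both of which vanish by (1), or a term in which $\chi$ acts on $f$, $g$, or the product $fg$. Applying the graded Leibniz rule to $\{\chi, fg\}$, the remaining contributions reassemble to $\SN{X(f), g}_{J} + (-1)^{\widetilde{X}(\widetilde{f}+1)}\SN{f, X(g)}_{J}$, which is (2). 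For $(2) \Rightarrow (1)$, I would set $f = \Id$ in (2); using $X(\Id) = 0$ and the direct coordinate computation $\SN{\Id, g}_{J} = Q(g)$, the derivation identity reduces to $[X, Q] = 0$, i.e.\ $\{\chi, \Q\} = 0$. With this piece in hand, what remains of (2) is a statement purely about $S$, and matching coefficients of the derivatives of $f$ and $g$ in local coordinates, as in the proof of Theorem~1, yields $\{\chi, S\} = 0$.

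For $(2) \Leftrightarrow (3)$, I would expand the graded commutator $[X, X_{f}](g) - (-1)^{\widetilde{X}} X_{X(f)}(g)$ directly from $X_{h}(g) = (-1)^{\widetilde{h}}\SN{h,g}_{J} - Q(h) g$. A straightforward rearrangement shows that this expression equals
\[
(-1)^{\widetilde{f}}\bigl(X\SN{f,g}_{J} - \SN{X(f), g}_{J} - (-1)^{\widetilde{X}(\widetilde{f}+1)}\SN{f, X(g)}_{J}\bigr) - [X,Q](f)\, g,
\]
so (3) is equivalent to the simultaneous vanishing of the derivation defect of $X$ on $\SN{\cdot,\cdot}_{J}$ and of $[X, Q]$. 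Taking $f = \Id$ in (3), together with $X_{\Id} = Q$ and $X(\Id) = 0$, yields $[X, Q] = 0$ at once, and the remaining content of (3) is then precisely (2). Conversely, (2) entails $[X, Q] = 0$ by the second paragraph, and hence both pieces of the displayed expression vanish, giving (3).

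The main obstacle is strictly combinatorial: keeping track of the graded signs from repeated applications of the Jacobi identity of the canonical Poisson bracket, combined with the sign conventions built into $\SN{\cdot,\cdot}_{J}$. No deeper difficulty arises; each step appeals only to the graded Jacobi and Leibniz identities on $T^{*}M$, the generalised Leibniz rule of Theorem~1, and the compatibility conditions $1'$ and $2'$ of an odd Jacobi structure.
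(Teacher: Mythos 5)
Your proposal is correct and follows essentially the same route as the paper: the symbol calculus $L_{X} = \{\chi,\bullet\}$ combined with the graded Jacobi identity of the canonical Poisson bracket for $(1)\Leftrightarrow(2)$, and a direct expansion of $[X,Y_{f}]$ against $X_{X(f)}$ for the third condition. If anything your write-up is more complete than the paper's, which only displays the forward computations; your use of the specialisations $\SN{\Id,g}_{J}=Q(g)$ and $X_{\Id}=Q$ to extract $[X,Q]=0$ supplies cleanly the converse implications that the paper leaves implicit.
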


\begin{proof}
Let $X \in \Vect(M)$ be a vector field. Then consider the  symbol or ``Hamiltonian" of such a vector field: $X^{A} \frac{\partial}{\partial x^A}  \rightarrow \chi = X^{A}p_{A} \in C^{\infty}(T^{*}M)$. The Lie derivative with respect to the vector field acting on $C^{\infty}(T^{*}M)$ is just $L_{X} = \{\chi, \bullet \}$.\\

1. $\Longleftrightarrow$ 2.  is proved via successive use of the Jacobi identity for the canonical Poisson bracket together with 1. Explicitly:
\begin{eqnarray}
\nonumber X(\SN{f,g}_{J}) &=&  (-1)^{(\widetilde{f} + \widetilde{X})+1} \{ \{S, \{ \chi,f \}  \},g \} - (-1)^{(\widetilde{f} + \widetilde{X})+1} \{ \Q , \{\chi,f \}g\}\\
\nonumber &+& (-1)^{\widetilde{X}(\widetilde{f}+1) + \widetilde{f}+1} \{ \{ S,f\}, \{\chi, g  \}  \} - (-1)^{\widetilde{X}(\widetilde{f}+1) + \widetilde{f}+1} \{\Q,f \{\chi,g \} \},
\end{eqnarray}
which establishes the result.\\

1. $\Longleftrightarrow$ 3. via direct computation. Explicitly:
\begin{eqnarray}\nonumber
X(Y_{f}(g)) &=& (-1)^{\widetilde{f}} \SN{X(f),g}_{J} + (-1)^{\widetilde{f}+ \widetilde{X}(\widetilde{f}+1)} \SN{f,X(g)}_{J}\\
\nonumber &-& X(Q(f))g - (-1)^{ \widetilde{X}(\widetilde{f}+1)} Q(f)Q(g),
 \end{eqnarray}
 using the derivation property of $X$ over the odd Jacobi bracket. Then
 \begin{equation}\nonumber
 Y_{f}(X(g)) = (-1)^{\widetilde{f}}\SN{f,X(g) }_{J} - Q(f)X(g),
 \end{equation}
 gives
 \begin{equation}\nonumber
 [X,Y_{f}](g) = (-1)^{\widetilde{f}}\SN{X(f), g} - (-1)^{\widetilde{X}} Q(X(f))g,
 \end{equation}
 using 1. Thus the result is established.

\end{proof}

\begin{corollary}\label{corol:homologicalvectorfield}
The homological vector field $Q$ satisfies the following:
\begin{enumerate}
\item $Q(\SN{f,g}_{J}) = \SN{Q(f), g} + (-1)^{\widetilde{f}+1} \SN{f, Q(g)}$,
\item $[Q, X_{f}] = - X_{Q(f)}$,
\end{enumerate}
for all $f,g \in C^{\infty}(M)$.
\end{corollary}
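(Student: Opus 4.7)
The plan is to obtain both statements as an immediate application of Lemma (the one on Jacobi vector fields) to the specific vector field $Q$. The key observation is that $Q$ is itself a Jacobi vector field: its ``Hamiltonian'' (principal symbol) is precisely $\Q \in C^{\infty}(T^{*}M)$, and the two conditions $L_{Q}S = \{\Q, S\} = 0$ and $L_{Q}Q = \{\Q, \Q\} = 0$ are exactly the rewritten odd Jacobi conditions $1'$ and $2'$ that have already been observed to follow from the definition of an odd Jacobi structure (the invariance and homological conditions, respectively). This point was in fact already flagged in the remark following the definition of a Jacobi vector field.

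Once $Q$ is recognized as a Jacobi vector field, statement 1 is just part 2 of Lemma applied to $X = Q$, and statement 2 is part 3 of Lemma applied to $X = Q$. In both cases the only thing to check is the specialization of the sign. Since $\widetilde{Q} = 1$, the sign $(-1)^{\widetilde{X}(\widetilde{f}+1)}$ in the derivation identity becomes $(-1)^{\widetilde{f}+1}$, matching statement 1. Similarly, $(-1)^{\widetilde{X}} = -1$, so $[Q, X_{f}] = (-1)^{\widetilde{Q}} X_{Q(f)} = -X_{Q(f)}$, matching statement 2.

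There is essentially no obstacle: the proof is a one-line invocation of the previous lemma together with a bookkeeping of signs arising from $\widetilde{Q} = 1$. The only thing to be careful about is making sure the identification of the symbol of $Q$ with $\Q$ is explicit, so that the hypotheses of Lemma are visibly satisfied, and that the reader sees that conditions $\{\Q, \Q\} = 0$ and $\{\Q, S\} = 0$ come for free from the definition of an odd Jacobi structure. No further computation in local coordinates is needed, since Lemma has already been proved at that level.
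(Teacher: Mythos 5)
Your proof is correct and is exactly the route the paper intends: the corollary is stated immediately after the lemma on Jacobi vector fields precisely because $Q$ is a Jacobi vector field (its symbol is $\Q$ and the conditions $\{\Q,\Q\}=0$, $\{\Q,S\}=0$ are the rewritten homological and invariance conditions), so parts 2 and 3 of that lemma with $\widetilde{X}=1$ give the two identities with the stated signs. No further argument is needed.
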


\begin{proposition}\label{prop:morphism}
The assignment $f \rightsquigarrow X_{f}$ is a morphism  between the odd Lie algebra on $C^{\infty}(M)$ provided by the odd Jacobi brackets and the Lie algebra of vector fields. Specifically, the following holds:
\begin{equation}
[X_{f}, X_{g}] = - X_{\SN{f,g}_{J}}
\end{equation}
for all $f,g \in C^{\infty}(M)$.
\end{proposition}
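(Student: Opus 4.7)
Both sides of the identity $[X_f, X_g] = -X_{\SN{f,g}_J}$ are vector fields on $M$, so it suffices to verify them on an arbitrary test function $h \in C^{\infty}(M)$. The cleanest way to organise the computation is to decompose the Hamiltonian vector field as $X_f = (-1)^{\widetilde{f}}\mathcal{L}_f - M_{Q(f)}$, where $\mathcal{L}_f := \SN{f, \bullet}_{J}$ denotes the ``left Jacobi-bracketing'' operator and $M_{Q(f)}$ denotes multiplication by the function $Q(f) \in C^{\infty}(M)$. Expanding the graded commutator $[X_f, X_g](h) = X_f X_g(h) - (-1)^{(\widetilde{f}+1)(\widetilde{g}+1)} X_g X_f(h)$ then produces four families of terms: a pure $\mathcal{L}\mathcal{L}$ contribution, two mixed $\mathcal{L}M$--$M\mathcal{L}$ contributions, and a pure $MM$ contribution.

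The plan is to treat each family in turn. The $\mathcal{L}\mathcal{L}$ family is handled by the graded Jacobi identity (Theorem~1, part~2): after cleaning the parity-shifted signs arising from $\widetilde{\SN{f,g}_J} = \widetilde{f} + \widetilde{g} + 1$, the combination $(-1)^{\widetilde{f}+\widetilde{g}}\mathcal{L}_f\mathcal{L}_g + (-1)^{\widetilde{f}\widetilde{g}}\mathcal{L}_g\mathcal{L}_f$ collapses to $(-1)^{\widetilde{f}+\widetilde{g}}\mathcal{L}_{\SN{f,g}_J}$. The pure $MM$ contribution vanishes immediately by supercommutativity of $C^{\infty}(M)$. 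For each mixed piece I invoke the generalised Leibniz rule (Theorem~1, part~3): expanding $\SN{f, Q(g)h}_J$ creates an anomaly term $\SN{f,\Id}_J\, Q(g)h = (-1)^{\widetilde{f}} Q(f)Q(g)h$, and these anomalies precisely cancel the ``orphan'' $Q(f)Q(g)h$-contributions that emerge when reorganising the remaining mixed pieces. What remains from all the mixed and $MM$ contributions is $\SN{Q(f), g}_J\, h - (-1)^{\widetilde{f}}\SN{f, Q(g)}_J\, h$.

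The final step is to recognise this leftover, via Corollary~1, part~1, as exactly $Q(\SN{f,g}_J)\, h$. Assembling the surviving contributions then yields
\[
[X_f, X_g](h) = (-1)^{\widetilde{f}+\widetilde{g}}\SN{\SN{f,g}_J,h}_J + Q(\SN{f,g}_J)\,h,
\]
which one confirms is precisely $-X_{\SN{f,g}_J}(h)$ after unpacking the definition of the Hamiltonian vector field. The main obstacle is not conceptual but computational: keeping track of the graded signs induced by the parity shift in the Jacobi bracket, and correctly pairing the anomaly $\SN{f,\Id}_J$ generated by the generalised Leibniz rule with the compensating $MM$-contribution. The pleasing structural point is that the correction $-Q(f)g$ in the definition of $X_f$ is exactly calibrated to the Jacobi-bracket anomaly, so that the commutator of two Hamiltonian vector fields remains Hamiltonian despite the absence of a strict derivation property for $\SN{\bullet,\bullet}_J$.
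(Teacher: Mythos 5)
Your proposal is correct and follows essentially the same route as the paper's proof: expand $[X_f,X_g]$ on a test function, use the generalised Leibniz rule (with the resulting $\SN{f,\Id}_J$-anomalies and multiplication terms cancelling), then collapse the double-bracket terms via the Jacobi identity and the remaining $\SN{Q(f),g}_Jh - (-1)^{\widetilde f}\SN{f,Q(g)}_Jh$ via the derivation property of $Q$ from Corollary~1. Your $\mathcal{L}/M$ decomposition merely makes explicit the bookkeeping the paper summarises as ``terms that cancel,'' and all your signs check out.
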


\begin{proof}
Writing out the commutator explicitly we obtain
\begin{eqnarray}
\nonumber [X_{f}, X_{g}](h)&=& (-1)^{\widetilde{f} + \widetilde{g}}\SN{f,\SN{g,h}}_{J}- (-1)^{\widetilde{f} + \widetilde{g} + (\widetilde{f}+1)(\widetilde{g}+1)}\SN{g,\SN{f,h}}_{J}\\
\nonumber &-& (-1)^{\widetilde{f}} \SN{f, Q(g)}_{J} h + (-1)^{\widetilde{g} +(\widetilde{f}+1)(\widetilde{g}+1)} \SN{g, Q(f)}_{J} h\\
\nonumber & + & \textnormal{ terms that cancel}.
\end{eqnarray}
In the above we have explicitly used the generalised Leibniz rule for the odd Jacobi bracket. Importantly all other possible terms cancel. Then using the Jacobi identity for the odd Jacobi brackets in the form
\begin{equation}\nonumber
\SN{\SN{f,g}_{J},h}_{J} = \SN{f, \SN{g,h}_{J}}_{J} - (-1)^{(\widetilde{f}+1)(\widetilde{g}+1)} \SN{g, \SN{f,h}_{J}}_{J},
\end{equation}

and the differential property of $Q$ over the odd Jacobi bracket we obtain

\begin{equation}\nonumber
[X_{f}, X_{g}] = (-1)^{\widetilde{f} + \widetilde{g}} \SN{\SN{f,g}_{J}, h}_{J} + \left(Q(\SN{f,g}_{J})\right)h,
\end{equation}
and thus the result is established.
\end{proof}

Unlike the Schouten or indeed the Poisson case, Hamiltonian vector fields on a Jacobi manifold (both even or odd) in general do not generate infinitesimal automorphisms of the bracket structure.

\begin{proposition}\label{prop:Hamiltonianvectorfields}
A Hamiltonian vector field $X_{f} \in \Vect(M)$ is a Jacobi vector field if and only if $f \in C^{\infty}(M)$ is Q-closed.
\end{proposition}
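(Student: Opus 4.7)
The plan is to rewrite the Jacobi vector field conditions $L_{X_{f}}S=0$ and $L_{X_{f}}\Q=0$ as Poisson bracket equations on $T^{*}M$ using the symbol calculus: for any vector field $X\in\Vect(M)$ with symbol $\chi\in C^{\infty}(T^{*}M)$ linear in the fibre coordinates, the Lie derivative of any fibrewise polynomial on $T^{*}M$ is just $\{\chi,\bullet\}$. So $X_{f}$ is Jacobi precisely when $\{\chi_{X_{f}}, S\}=0$ and $\{\chi_{X_{f}}, \Q\}=0$.

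First I would identify the symbol of the Hamiltonian vector field. Starting from $X_{f}(g)=(-1)^{\widetilde{f}}\SN{f,g}_{J}-Q(f)g$, expanding $\SN{\cdot,\cdot}_{J}$ and using $\{\Q,fg\}=Q(f)g+(-1)^{\widetilde{f}}fQ(g)$ together with $\{f\Q,g\}=fQ(g)$ (since $\{f,g\}=0$), a short computation gives
\[
\chi_{X_{f}} \;=\; (-1)^{\widetilde{f}}f\Q-\{S,f\}.
\]

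The core calculation is $\{\chi_{X_{f}},S\}$. The pivotal identity is the super Jacobi identity applied to the two odd entries $(S,S,f)$, which reduces $\{S,\{S,f\}\}$ to $\tfrac{1}{2}\{\{S,S\},f\}=-\{\Q S,f\}$; expanding by the graded Leibniz rule and using $\{S,S\}=-2\Q S$ and $\{\Q,f\}=Q(f)$, one obtains $\{S,\{S,f\}\}=-\Q\{S,f\}-(-1)^{\widetilde{f}}Q(f)\,S$. Combining with the contribution from the $f\Q$ piece of $\chi_{X_{f}}$, computed using $\{\Q,S\}=0$, the two $\Q\{S,f\}$ contributions cancel and one is left with
\[
\{\chi_{X_{f}},\,S\} \;=\; Q(f)\cdot S .
\]
An entirely parallel computation, now invoking $\{\Q,\Q\}=0$ in place of the compatibility condition, yields $\{\chi_{X_{f}},\Q\}=(-1)^{\widetilde{f}}\{S,Q(f)\}+Q(f)\Q$, every term of which vanishes as soon as $Q(f)=0$.

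Both directions of the proposition now follow: if $Q(f)=0$, both Poisson brackets vanish, so $X_{f}$ is a Jacobi vector field. Conversely, if $X_{f}$ is Jacobi, then $\{\chi_{X_{f}},S\}=0$ gives the polynomial identity $Q(f)\cdot S=0$ on $T^{*}M$, which forces $Q(f)$ to vanish on the (open, dense) locus where the tensor $S^{AB}$ is non-zero, hence everywhere by continuity for a bona fide odd Jacobi structure. The main obstacle is the sign bookkeeping in the super Poisson calculus: because $S$ is odd, the super Jacobi identity on $(S,S,f)$ carries the characteristic factor $(-1)^{\widetilde{S}^{2}}=-1$, and this must be tracked carefully against the $(-1)^{\widetilde{f}}$ in the symbol $\chi_{X_{f}}$ and the signs coming from the graded Leibniz rule when moving $\Q$ past $f$ and $\{S,f\}$.
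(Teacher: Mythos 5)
Your forward direction is correct and takes a genuinely different route from the paper. The paper does not touch the symbol calculus here: it observes that $Q(f)=0$ reduces $X_f$ to $(-1)^{\widetilde f}\SN{f,\bullet}_J$, feeds this into Proposition~\ref{prop:morphism} to get $[X_f,X_g]=(-1)^{\widetilde f+1}X_{X_f(g)}$, and then invokes criterion 3 of Lemma~\ref{lemma:jacobivectorfields}. Your direct computation of $\chi_{X_f}=(-1)^{\widetilde f}f\Q-\{S,f\}$ and of the brackets $\{\chi_{X_f},S\}=Q(f)\,S$ and $\{\chi_{X_f},\Q\}=(-1)^{\widetilde f}\{S,Q(f)\}+Q(f)\,\Q$ checks out (the use of the odd Jacobi identity $\{S,\{S,f\}\}=\tfrac{1}{2}\{\{S,S\},f\}$ together with $\{S,S\}=-2\Q S$ and $\{\Q,S\}=0$ is exactly right), and it buys something the paper's argument does not: an explicit formula for the failure of $X_f$ to be a Jacobi vector field, visibly proportional to $Q(f)$.

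The gap is in your converse. From $Q(f)\cdot S=0$ you conclude $Q(f)=0$ by restricting to ``the open dense locus where $S^{AB}$ is non-zero'' and invoking continuity. That argument is not available on a supermanifold: $Q(f)$ and the components $S^{AB}$ can be nilpotent, so $Q(f)\,S^{AB}=0$ does not force $Q(f)=0$ even at points where $S^{AB}\neq 0$. Worse, the paper explicitly admits $S\equiv 0$ (Q-manifolds) as odd Jacobi manifolds, in which case your first equation carries no information whatsoever and you would have to fall back on the second equation $Q(f)\Q+(-1)^{\widetilde f}\{S,Q(f)\}=0$ --- which you do not use in this direction, and which again yields $Q(f)=0$ only under a nondegeneracy assumption on $\Q$. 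What your computation honestly establishes is: $X_f$ is a Jacobi vector field if and only if $Q(f)S=0$ and $Q(f)\Q+(-1)^{\widetilde f}\{S,Q(f)\}=0$; passing from these two identities to $Q(f)=0$ needs an extra argument or hypothesis. In fairness, the paper's own proof only runs the chain of implications in the direction $Q(f)=0\Rightarrow X_f$ Jacobi and is silent on the converse, so you are in good company --- but you should either supply the missing nondegeneracy hypothesis or state the equivalence you actually proved.
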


\begin{proof}
If  and only if $f$ is Q-closed, i.e. $Q(f) =0$ then $X_{f} = (-1)^{\widetilde{f}}\SN{f, \bullet}_{J}$. Using Proposition \ref{prop:morphism}. we have
\begin{equation}\nonumber
[X_{f}, X_{g}] = - X_{\SN{f,g}_{J}} = (-1)^{\widetilde{f}+1} X_{X_{f}(g)},
\end{equation}
for any $g \in C^{\infty}(M)$. Then via Lemma \ref{lemma:jacobivectorfields}. the result is established.
\end{proof}

\subsection{Basic examples}\label{subsec:examples}

Let us turn our attention to straight forward examples to show that the category of odd Jacobi manifolds is not completely empty.

\begin{example}\label{R11}
\emph{The superline}\\
Consider the supermanifold $\mathbbmss{R}^{1|1}$, which we equip with local coordinates $(t, \xi)$. Here $t$ is commuting and $\xi$ anticommuting. There is a canonical odd Jacobi structure  on $\mathbbmss{R}^{1|1}$ given by
\begin{eqnarray}
\nonumber S &=& - \pi p,\\
\nonumber \Q &=& - \pi,
\end{eqnarray}
where we have used fiber coordinates $(p, \pi)$ on $T^{*}(\mathbbmss{R}^{1|1})$. Direct computation produces
\begin{equation}\nonumber
\{S,S  \}_{T^{*}(\mathbbmss{R}^{1|1})} = -2 \: (-\pi) (-\pi p),
\end{equation}
which established the fact we have an odd Jacobi structure. Note that as $\pi^{2}=0$, $S$ can also be considered as a Schouten structure. The odd Jacobi brackets are given by
\begin{eqnarray}
\nonumber \SN{f,g}_{J} &=& (-1)^{\widetilde{f}}\frac{\partial f}{\partial \xi}\frac{\partial g}{\partial t}- \frac{\partial f}{\partial t}\frac{\partial g}{\partial \xi}\\
\nonumber &-& (-1)^{\widetilde{f}}\left( \frac{\partial f}{\partial \xi} \right)g - f \left( \frac{\partial g}{\partial \xi} \right),
\end{eqnarray}
which we recognise as the canonical Schouten bracket on $\mathbbmss{R}^{1|1}$ plus a term that spoils the strict Leibniz rule.
\end{example}

\begin{example}
\emph{Schouten manifolds}\\
These are understood as odd Jacobi manifolds for which $Q=0$. Schouten manifolds are of particular interest in mathematical physics due to their connection with the BV-antifield formalism.
\begin{enumerate}
\item Lie--Schouten structures: A vector space $\mathfrak{g}$ is a Lie algebra if and only if $\Pi \mathfrak{g}$ comes equipped with a weight minus one (``linear") homological vector field. Here $\Pi$ is the parity reversion functor.  If we employ local coordinates $(\xi^{\alpha})$ on $\Pi \mathfrak{g}$ then the homological vector field is given by:
    \begin{equation}\nonumber
    Q_{\mathfrak{g}} = \frac{1}{2} \xi^{\alpha}\xi^{\beta}Q^{\gamma}_{\beta \alpha} \frac{\partial}{\partial \xi^{\gamma}}.
     \end{equation}
     A weight minus one Schouten structure on $T^{*}(\Pi \mathfrak{g}^{*})$ can be associated with the homological vector field. Employing natural local coordinates $(\eta_{\alpha}, \pi^{\alpha})$  the Schouten structure is given by
     \begin{equation}\nonumber
     S = \frac{1}{2}(-1)^{\widetilde{\alpha} + \widetilde{\beta}} \pi^{\alpha} \pi^{\beta}Q^{\gamma}_{\alpha \beta} \eta_{\gamma}.
     \end{equation}
     The Schouten condition $\{ S,S \}=0$, is equivalent to the homological condition on $Q_{\mathfrak{g}}$ and in turn is equivalent to the Jacobi identity on the initial Lie algebra structure $\mathfrak{g}$.
\item Odd symplectic manifolds: An odd symplectic manifold is defined as a manifold $M$ of dimensions $(n|n)$ equipped with a closed non-degenerate odd two form denoted $\omega$, understood as a function on the total space of $\Pi TM$ . In natural local coordinates the odd symplectic form is given by
    \begin{equation}\nonumber
    \omega = \frac{1}{2} dx^{A} dx^{B}\omega_{BA}(x).
    \end{equation}
    The non degeneracy condition means that, as an matrix $\omega_{BA}$ is invertible. Let us denote this inverse by $\omega^{AB}$. Then associated with an odd symplectic form is a Schouten structure given by
    \begin{equation}\nonumber
    S = \frac{1}{2}(-1)^{\widetilde{B}} \omega^{AB}p_{B}p_{A} \in C^{\infty}(T^{*}M).
    \end{equation}
    The Schouten condition $\{S,S  \} =0$ is directly equivalent to the closed of $\omega$.
\end{enumerate}
\end{example}

\begin{example}
\emph{Q-manifolds}\\
These are understood as odd Jacobi manifolds with $S=0$. The  odd Jacobi bracket  on a Q-manifold is given by $\SN{f,g}_{Q} = (-1)^{\widetilde{f}} Q(fg)$. Such manifold, and in particular their algebra of functions is of wide interest in mathematics due to the fact that many algebraic structures can be encoded on formal Q-manifolds.
\begin{enumerate}
\item The de Rham complex: The algebra of differential (pseudo)forms over a manifold $M$ is understood as the algebra of  functions on the manifold $\Pi TM$. The de Rham differential is the canonical homological vector field $d = dx^{A} \frac{\partial}{\partial x^{A}}$, where we have employed natural local coordinates $(x^{A}, dx^{A})$. Then over any manifold, the algebra of differential forms comes equipped with an odd Jacobi bracket given by $\SN{\alpha, \beta}_{d} = (-1)^{\widetilde{\alpha}} (d\alpha) \beta + \alpha (d\beta)$ with $\alpha, \beta \in C^{\infty}(\Pi TM)$.
\item Lie algebras: Let the vector space $\mathfrak{g}$ be a Lie algebra. Then as seen in the previous example, $(\Pi \mathfrak{g}, Q_{\mathfrak{g}})$ is a Q-manifold. Thus, $C^{\infty}(\Pi \mathfrak{g}) $ has an odd Jacobi bracket (of weight one in linear coordinates) given by
    \begin{equation}\nonumber
    \SN{f,g}_{Q} = \left( (-1)^{\widetilde{f}} \frac{1}{2}\xi^{\alpha} \xi^{\beta} Q_{\beta \alpha}^{\gamma} \frac{\partial f}{\partial \xi^{\gamma}}\right)g + f \left(\frac{1}{2}\xi^{\alpha} \xi^{\beta} Q_{\beta \alpha}^{\gamma} \frac{\partial g}{\partial \xi^{\gamma}}\right).
    \end{equation}
 \end{enumerate}
\end{example}

\begin{remark}
It is worth noting that this construction of an odd Jacobi bracket generalises directly to $L_{\infty}$-algebras (c.f. \cite{Lada:1992wc}), which can be understood in terms of homological vector fields inhomogeneous in the linear coordinate $\xi$. Thus, the associated odd Jacobi bracket is also inhomogeneous in weight, yet remarkably binary. Similarly, one can build an odd Jacobi bracket associated with a Lie algebroid or even an $L_{\infty}$-algebroid.
\end{remark}

\begin{example}\label{example:oddcontact}
\emph{Odd contact manifolds}\\
Rather than diverge into the general theory of contact manifolds let us explore a specific example. Consider the manifold $M := \Pi T^{*}N\times \mathbb{R}^{0|1}$, where $N$ is a pure even (classical) manifold\footnote{Due to a Darboux theorem for contact manifolds this example is also generic. One can generalise the arguments of Arnold \cite{Arnold1989} (see Appendix 4) without much difficulty to include odd contact structures on supermanifolds.}. \\

Let us employ natural local coordinates $(x^{a}, x^{*}_{a}, \tau)$. The coordinates $x^{a}$ are even, while the other coordinates $x^{*}_{a}$ and $\tau$ are odd. The dimension of $M$ is $(n|n+1)$ assuming the dimension of $N$ is $n$.\\

The manifold $\Pi T^{*}N$ comes equipped with a canonical odd symplectic structure  $\omega = - dx^{*}_{a} dx^{a}$.  Also note that all odd sympelctic manifolds are equivalent  to an anticotangent bundle and that the base manifold can be chosen to be a classical manifold \cite{Khudaverdian2004,Schwarz1993}.  The manifold $M$ comes equipped with an \emph{odd contact one form}, which is the even one form
\begin{equation}\nonumber
\alpha = d \tau - x^{*}_{a}dx^{a}.
\end{equation}
We will for the purposes of this example stipulate that  the above is the correct form via ``superisation" of the even contact structure on $\mathbb{R}^{2n +1}$. Clearly the two form  $d \alpha = - dx^{*}_{a}dx^{a}$ is, unsurprisingly,  the canonical odd symplectic structure on $\Pi T^{*}N \subset M $.  We employ natural fibre coordinates $(dx^{a}, dx^{*}_{a}, d\tau)$ on $\Pi T M$ and  $(p_{a}, p^{a}_{*}, \pi)$ on $T^{*}M$. The coordinates $dx^{*}_{a}$, $d\tau$, $p_{a}$ are even and $dx^{a}$, $p^{a}_{*}$, $\pi$ are odd.  Then let us define the almost Schouten structure via

\begin{equation}\nonumber
\phi_{S}^{*}(\alpha) =0, \hspace{15pt}  \textnormal{and} \hspace{15pt} \phi^{*}_{S}(d \alpha) = S,
\end{equation}

where we have the standard fibre-wise morphism $\phi_{S}: T^{*}M \rightarrow \Pi T M$ associated with  any almost Schouten structure.  Let us take the \emph{Ansatz}
\begin{equation}\nonumber
S = p^{a}_{*}  \left( p_{a} + x^{*}_{a} \pi\right) \in C^{\infty}(T^{*}M),
\end{equation}

based on standard constructions related to even contact structures. Then

\begin{eqnarray}
\nonumber \phi^{*}_{S}(d x^{a}) &=& \frac{\partial S}{\partial p_{a}} =  p_{*}^{a}.\\
\nonumber \phi^{*}_{S}(d x^{*}_{a}) &=& - \frac{\partial S}{\partial p^{a}_{*}} = -\left(p_{a} +  x^{*}_{a} \pi  \right).\\
\nonumber \phi^{*}_{S}(d \tau) &=& -\frac{\partial S}{\partial \pi} =  -p_{*}^{a} x^{*}_{a}.
\end{eqnarray}

Direct computation gives

\begin{equation}\nonumber
\phi_{S}^{*}(\alpha) =  - p^{a}_{*}x^{*}_{a} - x^{*}_{a} p_{*}^{a} =0, \hspace{15pt}\textnormal{and} \hspace{15pt}\phi_{S}^{*}(d\alpha) = (p_{a} + x^{*}_{a}\pi)p_{*}^{a} = S.
\end{equation}

Thus our \emph{Ansatz} is confirmed to be correct. To extract the homological vector field one  needs to calculate the self Poisson bracket of the almost Schouten structure. Explicitly

\begin{eqnarray}
\nonumber \{ S,S \}_{T^{*}M} &=& 2 \frac{\partial S}{\partial p_{*}^{a}}\frac{\partial S}{\partial x^{*}_{a}}\\
\nonumber &=& 2 \left((p_{a} + x^{*}_{a} \pi)(-1)p^{a}_{*}\pi  \right)\\
 \nonumber &=& - 2 (-\pi) \left(p^{a}_{*}(p_{a} + x^{*}_{a}\pi)  \right),
\end{eqnarray}

 thus  $\Q = - \pi$. It is easy to see that $\{ \Q, S \}_{T^{*}M} = 0$ and $\{ \Q, \Q \}_{T^{*}M} =0$. Furthermore, notice that the homological vector field $Q = - \frac{\partial}{\partial \tau}$ satisfies

\begin{equation}\nonumber
i_{Q}\alpha = 1 \hspace{15pt} \textnormal{and} \hspace{15pt} i_{Q}(d\alpha) = 0.
\end{equation}

\noindent \textbf{Statement:} adding an extra odd variable (``odd time") to the canonical odd symplectic manifold $\Pi T^{*}N$ produces a manifold with an odd Jacobi bracket rather than a Schouten bracket.\\

 In natural local coordinates the odd Jacobi bracket is given by

\begin{eqnarray}
\nonumber \SN{f,g}_{J} &=& (-1)^{\widetilde{f}+ 1} \frac{\partial f}{\partial x^{*}_{a}} \frac{\partial g}{\partial x^{a}} - \frac{\partial f}{\partial x^{a}} \frac{\partial g}{\partial x^{*}_{a}}\\
\nonumber &+& x^{*}_{a} \frac{\partial f}{\partial x^{*}_{a}} \frac{\partial g}{\partial \tau} - (-1)^{\widetilde{f} + 1} \frac{\partial f}{\partial \tau} x^{*}_{a}\frac{\partial g}{\partial x^{*}_{a}}\\
\nonumber &+& f \frac{\partial g}{\partial \tau} - (-1)^{\widetilde{f}+1} \frac{\partial f}{\partial \tau} g.
\end{eqnarray}
We will return to this example in Section \ref{sec:odd contact}.

\end{example}

\subsection{The Schoutenization of an odd Jacobi structure}\label{subsec:Schoutenization}
In this section we show that via an extension of an odd Jacobi manifolds one can canonically construct a QS-manifold. This mimics very closely the classical situation of Poissonization. \\

Consider the manifold $M \times \mathbbmss{R}$, which we equip with natural coordinates $(x^{A}, t)$. Here $t$ is the commuting coordinate on the factor $\mathbbmss{R}$. Assuming that $M$ is in fact an odd Jacobi manifold one can build

\begin{equation}
\bar{S} = e^{-t}\left( S- \Q p \right) \in C^{\infty}( T^{*}(M \times \mathbbmss{R})),
\end{equation}

where $p$ is the momenta associated with $t$.

\begin{theorem}\label{th:schoutenization}
Let $(M, S, \Q)$ be an odd Jacobi manifold. Then the triple $(M\times \mathbbmss{R}, \bar{S}, \Q)$ is a QS-manifold in the sense of Voronov.
\end{theorem}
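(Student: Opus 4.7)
The plan is to verify the three Poisson-bracket conditions characterising a QS-manifold, namely $\{\bar{S},\bar{S}\}=0$, $\{\bar{\Q},\bar{S}\}=0$, and $\{\bar{\Q},\bar{\Q}\}=0$ on $T^{*}(M\times\mathbbmss{R})$. Since $Q$ is extended trivially to the $\mathbbmss{R}$-factor, its symbol on the larger manifold is still just $\Q$, independent of the new coordinates $t$ and $p$. Consequently the second and third identities are immediate: $\{\Q,\Q\}$ is unchanged from the bracket on $T^{*}M$ and vanishes by the odd Jacobi hypothesis, while $\{\Q,\bar{S}\}$ reduces to $e^{-t}\{\Q, S - \Q p\}$ and collapses to zero using $\{\Q,S\}=0$, $\{\Q,\Q\}=0$, and $\{\Q,p\}=0$.

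The substantive content is therefore $\{\bar{S},\bar{S}\}=0$. First I would pull out the scalar factor via the graded Leibniz rule. The key observation is that on $T^{*}(M\times\mathbbmss{R})$ one has $\{e^{-t},\,\cdot\,\} = e^{-t}\,\partial/\partial p$. After two applications of Leibniz, the bracket $\{e^{-t}f,\,e^{-t}f\}$ with $f := S - \Q p$ collects an overall factor $e^{-2t}$ and reduces to the single identity
\[
\{f,f\} + 2\Q f = 0.
\]
The anomalous $2\Q f$ term is precisely what the two brackets involving $e^{-t}$ produce via their $\partial/\partial p$ action on $f$, since $\partial f/\partial p = -\Q$.

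To verify this identity I would expand bilinearly. The cross term splits as $\{S,\Q p\} = \{S,\Q\}p - \Q\{S,p\}$ and vanishes, the first factor by the odd Jacobi invariance condition and the second because $S$ is $p$-independent. The term $\{\Q p,\Q p\}$ reduces similarly to a combination of $\{\Q,p\}$ and $\{\Q,\Q\}$ and also vanishes. Hence $\{f,f\}=\{S,S\}=-2\Q S$ by the odd Jacobi compatibility condition. Since $\Q$ is odd, $\Q^{2}=0$ as an ordinary product, so $2\Q f = 2\Q S - 2\Q^{2}p = 2\Q S$, and the two contributions cancel.

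The main obstacle is purely bookkeeping: correctly tracking the graded signs in the two Leibniz expansions and confirming that the contribution from $\{e^{-t},f\}$ lands on the $2\Q f$ term with the right coefficient, so that it neatly absorbs $\{f,f\}=-2\Q S$ via $\Q^{2}=0$. No ideas beyond the three odd Jacobi identities are required, but this cancellation is delicate and is the crux of why the Schoutenisation works.
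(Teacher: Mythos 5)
Your proposal is correct and follows essentially the same route as the paper: both verify the QS conditions by splitting the Poisson bracket on $T^{*}(M\times\mathbbmss{R})$ into its $T^{*}M$ and $T^{*}\mathbbmss{R}$ parts and then invoking the three odd Jacobi identities, the paper arriving at $\{\bar{S},\bar{S}\}=e^{-2t}\left(\{S,S\}+2\Q S-2p\{S,\Q\}+p^{2}\{\Q,\Q\}\right)$ while you package the same computation as the identity $\{f,f\}+2\Q f=0$ with $f=S-\Q p$. Your sign bookkeeping checks out, so the two arguments are the same up to presentation.
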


\begin{proof}
First observe that $\bar{S}$ is of order two in the momenta and is Grassmann odd. Second note the Poisson bracket on $T^{*}(M \times \mathbbmss{R})$ has the natural decomposition as $\{ ,\} = \{, \}_{T^{*}M} + \{ , \}_{T^{*}\mathbbmss{R}}$. It is then a straight forward exercise to take into account terms that contain conjugate variables and those that do not to show that
\begin{enumerate}
\item $\{ \bar{S}, \bar{S} \} = e^{-2t}\left( \{S,S \}+ 2 \Q S - 2p \{S, \Q\} +p^{2}\{\Q,\Q \}\right)$,
\item$\{\bar{S}, \Q \} = e^{-t} \{ S, \Q \} - p\{\Q, \Q \} $,
\end{enumerate}
and thus as $(M, S, \Q)$ be an odd Jacobi manifold we establish that $\{\bar{S}, \bar{S}\}=0$ and $\{ \bar{S}, \Q\}=0$. This establishes the proposition.
\end{proof}

\noindent \textbf{Statement:} adding  ``time" to an odd Jacobi manifold produces a QS-manifold.

\subsection{Exact QS-manifolds and odd Jacobi structures}\label{subsec:exactQS}

In this section we define the notion of an exact QS-manifold, understood as a fairly direct generalisation of an exact or homogeneous Poisson manifold. We then, taking Petalidou \cite{Petalidou2002} as our inspiration establish a link between exact QS-structures and odd Jacobi structures. In particular these structures are on the same manifold and no extension is required. This is in contrast to the process of Schoutenization of an odd Jacobi manifold, which requires the manifold to be extended by one even direction.

\begin{definition}
An \textbf{exact QS-manifold} is the quadruple $(M, \bar{S}, \bar{Q}, \bar{E})$, where $(M,  \bar{S}, \bar{Q})$ is a QS-manifold and $\bar{E} \in \Vect(M)$ is an even vector field, referred as the \textbf{homothety vector field} that  satisfies
\begin{equation}
L_{\bar{E}} \bar{S} = - \bar{S} \hspace{15pt} \textnormal{and} \hspace{15pt} L_{\bar{E}}\bar{Q} = -\bar{Q}.
\end{equation}
\end{definition}

The existence of the homothety vector field on a QS-manifold means that  both $\bar{\Q}$ and $\bar{S}$ are \emph{exact}

\begin{eqnarray}
 \nonumber \{ \mathcal{E}, \bar{S} \} = -\bar{S} &\longrightarrow& \bar{S} = \{\bar{S}, \mathcal{E}  \},\\
\nonumber \{ \mathcal{E}, \bar{\Q} \} = -\bar{\Q} &\longrightarrow& \bar{\Q} = \{ \bar{\Q}, \mathcal{E} \},
\end{eqnarray}

with respect to the  operators on $C^{\infty}(T^{*}M)$ they generate. Here $\mathcal{E} \in C^{\infty}(T^{*}M)$ is the symbol of the homothety vector field. In other words, the Schouten structure is itself a trivial element in the \emph{Schouten cohomology}  as generated by $\delta_{\bar{S}} := \{\bar{S}, \bullet \}$. The homological structure is similarly  a trivial element in the  cohomology of the operator $L_{\bar{Q}}$. Poisson cohomology goes back to Lichnerowicz \cite{Lichnerowicz1977}, who also introduced the notion of (even) Jacobi manifolds.  For a discussion of the cohomology of a Q-manifold see \cite{Lyakhovich2004,Lyakhovich2010}.\\

Let us now proceed to the theorem relating exact QS structures  to odd Jacobi structures on the \emph{same underlying manifold}.

\begin{theorem}\label{theorem:exactQS}
Let $(M, \bar{S},\bar{Q},  \bar{E})$ be an exact QS-manifold. Then the  pair $(S = \bar{S} + \mathcal{E}\bar{\Q},  Q = \bar{Q})$,
provides an odd Jacobi structure on the manifold $M$.
\end{theorem}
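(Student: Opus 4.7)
The plan is to verify the three defining conditions of an odd Jacobi structure, in their equivalent ``symbol" form: $\{\Q,\Q\}=0$, $\{\Q,S\}=0$, and $\{S,S\}=-2\Q S$, working entirely in $C^{\infty}(T^{*}M)$ with the canonical Poisson bracket. Since $Q=\bar{Q}$, its symbol is $\Q=\bar{\Q}$ and the homological condition reduces immediately to $\{\bar{\Q},\bar{\Q}\}=0$, which is part of the QS hypothesis.

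The first preparatory step is to re-express the two exactness conditions on $(M,\bar{S},\bar{Q},\bar{E})$ as Poisson identities on the cotangent bundle. Writing $\mathcal{E}\in C^{\infty}(T^{*}M)$ for the symbol of $\bar{E}$, the Lie derivative relations $L_{\bar{E}}\bar{S}=-\bar{S}$ and $L_{\bar{E}}\bar{Q}=-\bar{Q}$ translate into $\{\mathcal{E},\bar{S}\}=-\bar{S}$ and $\{\mathcal{E},\bar{\Q}\}=-\bar{\Q}$. The proof will use only these two identities, the QS relations $\{\bar{S},\bar{S}\}=\{\bar{\Q},\bar{S}\}=\{\bar{\Q},\bar{\Q}\}=0$, the Leibniz rule, the graded commutativity of the pointwise product, and the consequence $\bar{\Q}^{2}=0$ (since $\bar{\Q}$ is odd).

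For the invariance $\{\bar{\Q},S\}=0$, expand $\{\bar{\Q},\bar{S}+\mathcal{E}\bar{\Q}\}$: the first summand vanishes by QS, and for the second one applies Leibniz, uses $\{\bar{\Q},\mathcal{E}\}=\bar{\Q}$ (Poisson skew-symmetry applied to exactness), and notes that the two resulting pieces vanish because $\bar{\Q}^{2}=0$ and $\{\bar{\Q},\bar{\Q}\}=0$. For the compatibility, expand
\[
\{S,S\} \;=\; \{\bar{S},\bar{S}\} + 2\{\bar{S},\mathcal{E}\bar{\Q}\} + \{\mathcal{E}\bar{\Q},\mathcal{E}\bar{\Q}\},
\]
where the factor of $2$ comes from graded symmetry of the bracket on the odd-odd pair. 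The first term vanishes by QS, the last by the same $\bar{\Q}^{2}$ argument as above, and the Leibniz expansion of the middle term together with $\{\bar{S},\mathcal{E}\}=\bar{S}$ and $\{\bar{S},\bar{\Q}\}=0$ yields $\{\bar{S},\mathcal{E}\bar{\Q}\}=\bar{S}\bar{\Q}$. Thus $\{S,S\}=2\bar{S}\bar{\Q}$. Comparing with $-2\Q S=-2\bar{\Q}\bar{S}-2\bar{\Q}\mathcal{E}\bar{\Q}=2\bar{S}\bar{\Q}$ (using $\bar{\Q}\bar{S}=-\bar{S}\bar{\Q}$ and $\bar{\Q}^{2}=0$) closes the argument. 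The only real obstacle is sign bookkeeping coming from two independent sources, Poisson skew-symmetry and graded commutativity of the product; there is no structural difficulty beyond that.
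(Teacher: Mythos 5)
Your proposal is correct and follows essentially the same route as the paper: verify the three conditions in their symbol form on $T^{*}M$, expand $\{S,S\}$ and $\{\bar{\Q},S\}$ by the Leibniz rule, and close using the QS relations together with $\{\mathcal{E},\bar{S}\}=-\bar{S}$, $\{\mathcal{E},\bar{\Q}\}=-\bar{\Q}$ and $\bar{\Q}^{2}=0$. The only cosmetic difference is that you isolate $\{\mathcal{E}\bar{\Q},\mathcal{E}\bar{\Q}\}$ as a separately vanishing block, whereas the paper expands all six terms at once; the sign bookkeeping and the final identification $\{S,S\}=2\bar{S}\bar{\Q}=-2\bar{\Q}S$ agree with the paper's computation.
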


\begin{proof}
The proof requires one to examine the the invariance and compatibility conditions for odd Jacobi structures. The homological condition is given.
\begin{itemize}
\item Writing out the self-Poisson bracket of $S$ one obtains
 \begin{eqnarray}
\nonumber \{S, S \} &=& \{\bar{S}, \bar{S} \} + 2 \{ \bar{S}, \mathcal{E} \}\bar{\Q} + 2 \mathcal{E}\{ \bar{S}, \bar{\Q} \}\\
\nonumber &-& \{ \mathcal{E}, \mathcal{E} \}\bar{\Q}^{2} - 2 \mathcal{E}\{ \mathcal{E}, \bar{\Q} \}\bar{\Q} + \mathcal{E}^{2}\{\bar{\Q}, \bar{\Q} \}\\
\nonumber &=& 2 \bar{\Q}\left(\{ \mathcal{E} , \bar{S}\}+ \mathcal{E} \{\mathcal{E}, \bar{\Q} \}\right)\\
\nonumber &=& - 2 \bar{\Q}\left ( \bar{S}  + \mathcal{E} \bar{\Q}\right).
\end{eqnarray}
\item Writing out the Poisson bracket between $\mathcal{Q}$ and $S$ one obtains
\begin{eqnarray}
\nonumber \{\mathcal{Q} , S \} &=& \{ \bar{\Q} , \bar{S}\} + \{\bar{\Q}, \mathcal{E}  \}\bar{\Q} + \mathcal{E}\{ \bar{\Q}, \bar{\Q}\}\\
\nonumber &=&  \bar{\Q}^{2} =0.
\end{eqnarray}
\end{itemize}
Thus $S$ and $Q$ define an odd Jacobi structure on the manifold $M$.
\end{proof}

Via a mild generalisation of the above proof we arrive at the following corollary:

\begin{corollary}
  Associated with any exact QS-structure on $M$ is a pencil of odd Jacobi structures also on $M$ given by  $(S = a \bar{S} + b \mathcal{E}\bar{\Q}, Q =  b\bar{Q})$ where $a,b$ are  even parameters (or just real  numbers).
\end{corollary}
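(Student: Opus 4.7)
The plan is to reproduce the computation of Theorem~\ref{theorem:exactQS} verbatim, but with the scalars $a$ and $b$ carried through as central even parameters, and then observe that the three defining conditions of an odd Jacobi structure follow from bilinearity of the canonical Poisson bracket. Since $a$ and $b$ are ordinary even parameters they commute with everything in sight, so no new sign subtleties appear beyond those already encountered in the proof above.

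The homological condition for $Q = b\bar{Q}$ is automatic: $(b\bar{Q})^{2} = b^{2}\bar{Q}^{2}=0$. For the invariance condition I would expand
\begin{equation}
\nonumber \{b\bar{\Q},\, a\bar{S}+b\mathcal{E}\bar{\Q}\} = ab\{\bar{\Q},\bar{S}\}+ b^{2}\{\bar{\Q}, \mathcal{E}\bar{\Q}\}.
\end{equation}
The first term vanishes because $(M,\bar{S},\bar{Q})$ is a QS-manifold. For the second, the graded Leibniz rule together with the homothety identity $\{\bar{\Q},\mathcal{E}\}=\bar{\Q}$, the homological condition $\{\bar{\Q},\bar{\Q}\}=0$, and $\bar{\Q}^{2}=0$ (Grassmann oddness) give $\{\bar{\Q},\mathcal{E}\bar{\Q}\}=\bar{\Q}^{2}+\mathcal{E}\cdot 0=0$.

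For the compatibility condition $\{S,S\} = -2\mathcal{Q}S$, I would expand $\{S,S\}$ by bilinearity into an $a^{2}$ term, a $2ab$ cross term, and a $b^{2}$ term. The $a^{2}$ piece is $\{\bar{S},\bar{S}\}=0$ by hypothesis. The $b^{2}$ piece $\{\mathcal{E}\bar{\Q},\mathcal{E}\bar{\Q}\}$ vanishes by exactly the calculation in the proof of Theorem~\ref{theorem:exactQS}, which only uses $\bar{\Q}^{2}=0$ and $\{\bar{\Q},\bar{\Q}\}=0$. This leaves the cross term $2ab\{\bar{S},\mathcal{E}\bar{\Q}\} = 2ab\,\bar{S}\bar{\Q}$, using $\{\bar{S},\mathcal{E}\}=\bar{S}$ and $\{\bar{S},\bar{\Q}\}=0$. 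On the other side, $-2\mathcal{Q}S = -2ab\,\bar{\Q}\bar{S} - 2b^{2}\bar{\Q}\mathcal{E}\bar{\Q}$; the second summand vanishes since $\bar{\Q}\mathcal{E}\bar{\Q}=\mathcal{E}\bar{\Q}^{2}=0$, and graded commutativity $\bar{S}\bar{\Q}=-\bar{\Q}\bar{S}$ (both odd) matches the two expressions.

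I do not anticipate any genuine difficulty: the corollary is a two-parameter bilinearisation of the preceding theorem, and the argument goes through line by line. The only points requiring care are that $a$ and $b$ must be treated as central even parameters, so that they pass freely through all brackets and graded products, and that one correctly tracks the sign $\bar{S}\bar{\Q}=-\bar{\Q}\bar{S}$ when comparing the two sides of the compatibility condition.
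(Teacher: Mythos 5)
Your proposal is correct and is precisely the ``mild generalisation'' of the proof of Theorem~\ref{theorem:exactQS} that the paper itself invokes (the paper gives no separate argument for the corollary): the $a^{2}$, $ab$ and $b^{2}$ pieces you identify are exactly the terms already computed there, rescaled by central even parameters. The only small imprecision is your remark that the $b^{2}$ piece $\{\mathcal{E}\bar{\Q},\mathcal{E}\bar{\Q}\}$ uses only $\bar{\Q}^{2}=0$ and $\{\bar{\Q},\bar{\Q}\}=0$ --- the term $-2\mathcal{E}\{\mathcal{E},\bar{\Q}\}\bar{\Q}$ also needs the homothety identity $\{\mathcal{E},\bar{\Q}\}=-\bar{\Q}$ before $\bar{\Q}^{2}=0$ kills it --- but since you defer to the full computation in the theorem's proof this does not affect the validity of the argument.
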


\noindent\textbf{Statement:} every exact QS-manifold is also an odd Jacobi manifold.\\

 Setting $a=b=1$ produces a ``canonical" odd Jacobi structure on $M$. Setting $a=1$ and $b=0$ confirms the notion that a Schouten manifold can be thought of as an odd Jacobi manifold with the trivial homological vector field. Setting $a=0$ and $b=1$ confirms the notion that a Q-manifolds can be thought of as an odd Jacobi manifold with the trivial Schouten structure. In a loose sense, intermediate values of $a$ and $b$ interpolate between the extremes of Schouten manifolds and Q-manifolds understood as examples of odd Jacobi manifolds.

\section{Jacobi Algebroids}

\subsection{Quasi Q-manifolds and Jacobi algebroids}\label{sec:main constructions}

In this section we propose a definition of a Jacobi algebroid in terms of an odd Jacobi structure on the total space of $\Pi E^{*}$, given a vector bundle $E \rightarrow M$. It will turn out that this definition is equivalent to that given by Grabowski \& Marmo \cite{Grabowski2001} (also see Iglesias \& Marrero \cite{Iglesias2001}). We postpone the details of this equivalence to the next section and take the following definition as the starting point of this work.

\begin{definition}
A vector bundle $E \rightarrow M$ is said to have the structure of a \textbf{Jacobi algebroid} if and only if the total space of $\Pi E^{*}$ comes equipped with a weight minus one odd Jacobi structure.
\end{definition}

Recall that an odd Jacobi structure on a manifold is a pair of odd functions on the total space of the cotangent bundle quadratic and linear in the fibre coordinates together with a series of conditions expressed in terms of the canonical Poisson bracket. Let us employ natural local coordinates $(x^{A}, \eta_{\alpha}, p_{A}, \pi^{\alpha})$ on the total space of $T^{*}(\Pi E^{*})$. The weight is assigned  as $\w(x^{A}) = 0$, $\w(p_{A})=0$, $\w(\eta_{\alpha}) = +1$ and $\w(\pi^{\alpha}) = -1$.  This is the \emph{natural weight} associated with the vector bundle structure $E^{*} \rightarrow M$. The parity of the coordinates is given by $ \widetilde{x}^{A}=  \widetilde{A}$, $\widetilde{\eta}_{\alpha}= (\widetilde{\alpha} +1)$, $\widetilde{p}_{A}= \widetilde{A}$ and  $\widetilde{\pi}^{\alpha} =  (\widetilde{\alpha}+1)$. In these natural local coordinates the odd Jacobi structure is given by

\begin{eqnarray}
S &=&(-1)^{\widetilde{\alpha}}\pi^{\alpha}Q_{\alpha}^{A}(x)p_{A}+ (-1)^{\widetilde{\alpha} + \widetilde{\beta}}\frac{1}{2}\pi^{\alpha}\pi^{\beta}Q_{\beta \alpha}^{\gamma}\eta_{\gamma},\\
\nonumber  \Q &=& \pi^{\alpha}Q_{\alpha}(x),
\end{eqnarray}

which are both functions on the total space of $T^{*}(\Pi E^{*})$. The notation and the sign factors employed  make clear the relation with Lie algebroids.   \\

This structure satisfies the conditions:
\begin{enumerate}
\item $\{\Q,\Q  \}_{T^{*}(\Pi E^{*})} =0$. \label{homcond}
\item $\{ \Q, S \}_{T^{*}(\Pi E^{*})} = 0$.
\item $\{S, S  \}_{T^{*}(\Pi E^{*})} =  - 2 \Q S$.
\end{enumerate}

Setting $\Q =0$ means that $S$ is a Schouten structure and thus we have a genuine Lie algebroid.  Note that due to the fact that the function $\Q$ does not contain conjugate variables the condition \ref{homcond}. is automatically satisfied.  This is not generally the case and typically \ref{homcond}. will be a non-trivial condition.\\

The Jacobi algebroid structure on the vector bundle $E \rightarrow M$ is directly equivalent to the existence of a weight minus one odd Jacobi bracket on $C^{\infty}(\Pi E^{*})$. That is the algebra of  ``multivector fields" comes equipped with the structure of an odd Jacobi algebra \emph{viz}

\begin{equation}
\nonumber \SN{X,Y}_{E}  = (-1)^{\widetilde{X}+1}\{  \{S, X   \}_{T^{*}(\Pi E^{*})} , Y\}_{T^{*}(\Pi E^{*})}- (-1)^{\widetilde{X}+1} \{ \Q, XY \}_{T^{*}(\Pi E^{*})},
\end{equation}

with $X,Y \in C^{\infty}(\Pi E^{*})$.\\

In natural local coordinates this bracket is given by

\begin{eqnarray}\nonumber
\SN{X,Y}_{E} &=& Q_{\alpha}^{A}\left((-1)^{(\widetilde{X}+ \widetilde{\alpha}+1)(\widetilde{A}+1) } \frac{\partial X}{\partial \eta_{\alpha}}\frac{\partial Y}{\partial x^{A}}  - (-1)^{(\widetilde{X}+1)\widetilde{\alpha}} \frac{\partial X}{\partial x^{A}}\frac{\partial Y}{\partial \eta_{\alpha}}\right)\\
\nonumber &-& (-1)^{(\widetilde{X}+1)\widetilde{\alpha}+ \widetilde{\beta}}Q_{\alpha \beta}^{\gamma}\eta_{\gamma}\frac{\partial X}{\partial\eta_{\beta}}\frac{\partial Y}{\partial \eta_{\alpha}}\\
\nonumber &+&(-1)^{\widetilde{X}}Q_{\alpha}\frac{\partial X}{\partial \eta_{\alpha}} Y  + X Q_{\alpha}\frac{\partial Y}{\partial \eta_{\alpha}}.
\end{eqnarray}

Where $X = X(x, \eta) =  X(x) + X^{\alpha}(x) \eta_{\alpha} + \frac{1}{2!}X^{\alpha \beta}(x) \eta_{\beta}\eta_{\alpha} + \cdots$ \emph{etc}. The above odd Jacobi bracket is the natural generalisation of the weight minus one Schouten bracket associated with a Lie algebroid, which itself is a generalisation of the Schouten--Nijenhuis bracket between multivector fields over a manifold.\\

\begin{theorem}
The existence of a Jacobi algebroid structure on the vector bundle  $E \rightarrow M$ is equivalent to $\Pi E$ being  a weight one quasi Q-manifold.
\end{theorem}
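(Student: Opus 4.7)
The plan is to exploit the canonical super-symplectomorphism $\mathcal{R}\colon T^{*}(\Pi E)\to T^{*}(\Pi E^{*})$ which in natural fibre coordinates interchanges the fibre coordinate $\xi^{\alpha}$ on $\Pi E$ with the momentum $\pi^{\alpha}$ conjugate to $\eta_{\alpha}$ on $\Pi E^{*}$, and, up to sign, the coordinate $\eta_{\alpha}$ with the momentum $\pi_{\alpha}$ conjugate to $\xi^{\alpha}$. This map is a Poisson isomorphism, respects parity, and reverses the natural weight on the two cotangent bundles, so a weight $-1$ Hamiltonian on $T^{*}(\Pi E^{*})$ corresponds to a weight $+1$ Hamiltonian on $T^{*}(\Pi E)$. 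The proof of the theorem will consist in transporting the Jacobi algebroid data through $\mathcal{R}$ and checking that the three odd Jacobi conditions correspond exactly to the two quasi Q-manifold conditions.

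First I would transport the data. Since $\Q=\pi^{\alpha}Q_{\alpha}(x)$ involves neither $p_{A}$ nor $\eta_{\alpha}$, it maps to the odd function $q:=\xi^{\alpha}Q_{\alpha}(x)\in C^{\infty}(\Pi E)$, of weight $+1$ and depending only on the base of $T^{*}(\Pi E)$. The odd function $S$, which is of order two in the fibre momenta $(p_{A},\pi^{\alpha})$ on the $\Pi E^{*}$ side, becomes a function of order one in the fibre momenta $(p_{A},\pi_{\alpha})$ on the $\Pi E$ side — in other words, the principal symbol $\mathcal{D}$ of a weight-one odd vector field $\D\in\Vect(\Pi E)$, with anchor-like part $\xi^{\alpha}Q_{\alpha}^{A}\,\partial/\partial x^{A}$ and structure-constant part $\tfrac{1}{2}\xi^{\alpha}\xi^{\beta}Q_{\beta\alpha}^{\gamma}\,\partial/\partial\xi^{\gamma}$.

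Next I would translate the three defining conditions. The homological condition $\{\Q,\Q\}_{T^{*}(\Pi E^{*})}=0$ is automatic, since $\Q$ is independent of the conjugate variables $p_{A}$ and $\eta_{\alpha}$; correspondingly, $q$ is a function on $\Pi E$ and so Poisson-commutes with itself on $T^{*}(\Pi E)$, and the condition carries no information on either side. The invariance condition $\{\Q,S\}_{T^{*}(\Pi E^{*})}=0$ transports, since $\mathcal{R}$ is Poisson, to $\{q,\mathcal{D}\}_{T^{*}(\Pi E)}=0$, which by the standard symbol-calculus identity for the Poisson bracket of a function against the symbol of a vector field is exactly $\D[q]=0$. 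The compatibility condition $\{S,S\}_{T^{*}(\Pi E^{*})}=-2\Q S$ transports to $\{\mathcal{D},\mathcal{D}\}_{T^{*}(\Pi E)}=-2\,q\,\mathcal{D}$; since $\{\mathcal{D},\mathcal{D}\}$ is the symbol of $[\D,\D]=2\D^{2}$ and $q\,\mathcal{D}$ is the symbol of the vector field $q\D$, this is equivalent to the quasi Q-manifold equation $\D^{2}=q\D$, any overall sign being absorbable into the sign conventions for $S$.

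Hence the three Jacobi algebroid conditions on $(S,\Q)$ on $\Pi E^{*}$ are precisely equivalent to the two quasi Q-manifold conditions $\D^{2}=q\D$ and $\D[q]=0$ on $(\D,q)$ on $\Pi E$, with both $\D$ and $q$ of weight $+1$; since $\mathcal{R}$ is invertible, the correspondence supplies both directions of the stated equivalence. The main technical obstacle I expect is the careful bookkeeping of signs and weights in the super/graded setting — once $\mathcal{R}$ is established as a weight-reversing, parity-preserving Poisson isomorphism, the theorem reduces to a translation of the same Hamilton–Poisson data between two dual presentations.
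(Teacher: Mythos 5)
Your proposal follows essentially the same route as the paper: transport $(S,\Q)$ through the canonical double vector bundle symplectomorphism $R$ between $T^{*}(\Pi E^{*})$ and $T^{*}(\Pi E)$, observe that the images are linear and of order zero in momenta respectively, and undo the symbol map to obtain $(\D,q)$ with the three Jacobi conditions translating into $\D^{2}=q\D$ and $\D[q]=0$. The only point you leave implicit is the sign bookkeeping you flag at the end; the paper resolves it by defining $q$ as \emph{minus} the transported $\hat{\Q}$ (i.e.\ $q=-(-1)^{\widetilde{\alpha}}\xi^{\alpha}Q_{\alpha}$), so that $\{S,S\}=-2\Q S$ becomes exactly $[\D,\D]=+2q\D$.
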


\begin{proof}
Recall that the canonical double vector bundle morphism

\begin{equation}\nonumber
T^{*}(\Pi E^{*}) \stackrel{R}{\longrightarrow} T^{*}(\Pi E),
\end{equation}

is a symplectomorphism between the respective canonical symplectic structures.  We place details of this morphism in an  appendix. Thus we can \emph{move} the odd Jacobi structure from $\Pi E^{*}$ to $\Pi E$. However the resulting structure over $\Pi E$ will not be a genuine odd Jacobi structure as the degree in momenta (fibre coordinates of the cotangents) is not conserved under the canonical double vector bundle morphism.\\

Let us employ natural local coordinates $(x^{A}, \xi^{\alpha}, p_{A}, \pi_{\alpha})$ on $T^{*}(\Pi E)$. The weight of the coordinates is assigned as $\w(\xi^{\alpha})=-1$ and $\w(\pi_{\alpha})= +1$. The parities are $\widetilde{\xi}^{\alpha} = \widetilde{\pi}_{\alpha}= (\widetilde{\alpha}+1)$.  Then the canonical double vector bundle morphism is given by

\begin{equation}
\nonumber R^{*}\left( \pi_{\alpha} \right) =  \eta_{\alpha}, \hspace{25pt}  R^{*}\left(\xi^{\alpha}  \right)  = (-1)^{\widetilde{\alpha}} \pi^{\alpha}.
\end{equation}

Then let us consider

\begin{eqnarray}\nonumber
\hat{S} := (R^{-1})^{*}S  &=& \xi^{\alpha}Q_{\alpha}^{A}(x) p_{A}+ \frac{1}{2}\xi^{\alpha}\xi^{\beta}Q_{\beta \alpha}^{\gamma}(x)\eta_{\gamma},\\
\nonumber \hat{\Q} := (R^{-1})^{*}\Q &=& (-1)^{\widetilde{\alpha}}\xi^{\alpha}Q_{\alpha}(x),
\end{eqnarray}

both of which are functions on the total space of $T^{*}(\Pi E)$. Note that the function $\hat{S}$ is now linear in moneta and that $\hat{\Q}$ is independent of momenta.  As $R$ is a symplectomorphism we naturally have

\begin{equation}\nonumber
\{ \hat{S}, \hat{S} \}_{T^{*}(\Pi E)} =  - 2 \hat{\Q}\hat{S},  \hspace{15pt}\textnormal{and}\hspace{15pt} \{\hat{\Q}, \hat{S} \}_{T^{*}(\Pi E)}=0.
\end{equation}

Then we can ``undo" the symbol map which gives an odd vector field on $\Pi E$ and an odd function linear in the fibre coordinate:

\begin{eqnarray}\nonumber
\hat{S}&\longrightarrow& \D = \xi^{\alpha}Q_{\alpha}^{A}(x) \frac{\partial}{\partial x^{A}}+ \frac{1}{2}\xi^{\alpha}\xi^{\beta}Q_{\beta \alpha}^{\gamma}(x) \frac{\partial}{\partial \xi^{\gamma}} \in \Vect(\Pi E),\\
\nonumber \hat{\Q} &\longrightarrow& q =  - (-1)^{\widetilde{\alpha}}\xi^{\alpha}Q_{\alpha}(x)  \in C^{\infty}(\Pi E).
\end{eqnarray}

Note the extra minus sign in the definition of $q$. As the symbol map takes commutators of vector fields to Poisson brackets etc., it is not hard to see that the conditions that $(S, \Q)$  be an odd Jacobi structure translates to  $\Pi E$ being  a quasi Q-manifold: \\
\begin{center}
\begin{tabular}{lcl}
 $[\D,\D] =  2 q\D$,  & and  & $\D[q] =0$.
\end{tabular}
\end{center}
The grading is with respect to the natural grading associated with the vector bundle structure $E \rightarrow M$. That is we assign the weight as $\bar{\w}(x^{A}) =0$ and $\bar{\w}(\xi^{\alpha})= 1$. Note that $\bar{\w} = - \w$.
\end{proof}

\subsection{Lie algebroids in the presence of  a 1-cocycle}\label{sec:from jacobi algebroids}

In this section we  in essence restate  Grabowski \& Marmo's theorem 5 of \cite{Grabowski2001} giving a one-to-one correspondence between Jacobi algebroids and Lie algebroids in the presence of a 1-cocycle.\\

As we are considering the total space $\Pi E$ to be a graded manifold we naturally have an Euler vector field, which counts the weight of objects via it's Lie derivative. In natural local coordinates the Euler vector field is given by $\Xi = \xi^{\alpha}\frac{\partial}{\partial \xi^{\alpha}}$ as we have assigned weight $\bar{\w}(x) =0$ and $\bar{\w}(\xi)= 1$. A ``differential form"  $\omega \in C^{\infty}(\Pi E)$ is homogeneous and of  weight $p$ if $\Xi(\omega)= p \omega$. In a similar way, a vector field, $V \in \Vect(\Pi E)$ is homogeneous and of weight $r$ if $[\Xi,V] = r V$. The action of the Euler vector field can be extended to higher tensor objects, but we will have no call to use it in this work. In relation to Jacobi algebroids, we will be exclusively  interested in object of weight one. Such objects are invariant under the action of the Euler vector field, or in more classical language they are \emph{linear} objects.

\begin{proposition}\label{prop:quasiQalgebroid}
Let $(\Pi E , \D, q)$ be the weight one quasi Q-manifold associated with a  Jacobi algebroid. Then
\begin{equation}\nonumber
Q :=  \D - q \Xi,
\end{equation}
defines a homological vector field on $\Pi E$ of weight one and thus a Lie algebroid structure on $E \rightarrow M$. Furthermore, we have $Q(q)=0$ and thus we have a Lie algebroid in the presence of a 1-cocycle.
\end{proposition}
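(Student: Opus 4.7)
The plan is to verify by direct computation the three claims: first, that $Q$ has weight one; second, that $[Q,Q]=0$; and third, that $Q(q)=0$. The ingredients are just the defining properties of the quasi Q-manifold structure ($[\D,\D]=2q\D$ and $\D[q]=0$), the fact that $q$ is Grassmann odd so $q^{2}=0$, and two easy grading identities: $\Xi(q)=q$ (because $q$ is linear in $\xi^{\alpha}$, i.e.\ of weight one as read off from the explicit formula) and $[\Xi,\D]=\D$ (because $\D$ is of weight one).

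The weight claim is immediate: $\Xi$ is of weight zero, while $\D$ and $q$ are of weight one, so $q\Xi$ is of weight one and hence so is $Q=\D-q\Xi$. The cocycle identity is also a one-liner,
\begin{equation}
\nonumber Q(q)=\D(q)-q\,\Xi(q)=0-q\cdot q=0,
\end{equation}
using $\D[q]=0$, $\Xi(q)=q$, and $q^{2}=0$.

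The substantive step is showing $[Q,Q]=0$, equivalently $Q^{2}=0$. I would expand
\begin{equation}
\nonumber Q^{2}=\D^{2}-\D\circ(q\Xi)-(q\Xi)\circ\D+(q\Xi)^{2}
\end{equation}
as a differential operator acting on an arbitrary test function $f$ and simplify term by term. The first term is $\D^{2}=q\D$ by hypothesis, and the last vanishes because $q^{2}=0$. For the cross terms I would use the graded Leibniz rule and the commutation relation $[\Xi,\D]=\D$ to rewrite $\D(q\,\Xi f)=-q\,\D(\Xi f)$ (since $\D[q]=0$ and both $\D,q$ are odd) and $(q\Xi)(\D f)=q\,\D f + q\,\D(\Xi f)$. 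Assembling the four contributions, the $q\,\D f$ pieces cancel between $\D^{2}$ and $-(q\Xi)\circ \D$, and the two $q\,\D(\Xi f)$ pieces cancel between $-\D\circ(q\Xi)$ and $-(q\Xi)\circ \D$, giving $Q^{2}f=0$ for all $f$.

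The only real obstacle is sign bookkeeping, but it is routine once one has catalogued the oddness of $\D$ and $q$ and the two identities $\Xi(q)=q$ and $[\Xi,\D]=\D$; no further structural input beyond the quasi Q-manifold hypothesis is required. Having produced a weight-one homological vector field $Q$ on $\Pi E$, interpretation \ref{homvect}.\ of the introduction then supplies the Lie algebroid structure on $E\to M$, and $Q(q)=0$ means precisely that $q$ is a cocycle in the Chevalley--Eilenberg complex of this Lie algebroid, i.e.\ a $1$-cocycle in the sense of Iglesias--Marrero.
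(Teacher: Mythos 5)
Your proposal is correct and follows essentially the same route as the paper: expand $Q^{2}$ into the four terms $\D^{2}-\D\circ(q\Xi)-(q\Xi)\circ\D+(q\Xi)^{2}$, kill the last term via $q^{2}=0$, combine the cross terms into $-q[\Xi,\D]=-q\D$ using $\D[q]=0$, and cancel against $\D^{2}=q\D$; the weight and cocycle claims are handled the same way. Your explicit check of $Q(q)=\D(q)-q\,\Xi(q)=-q^{2}=0$ is slightly more detailed than the paper's ``it is clear,'' but the argument is the same.
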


\begin{proof}
The weight conditions are clear from the definitions. We need to prove that $Q$ is homological. Explicitly
\begin{eqnarray}
\nonumber Q^{2}\omega &=& \D^{2}\omega + q \Xi \left(q \Xi(\omega) \right) - \D\left ( q\: \Xi(\omega)\right)- q\:\Xi\left( \D \omega \right)\\
\nonumber &=& \D^{2}\omega - q [\Xi, \D]\omega\\
\nonumber &=& \D^{2}\omega - q \D \omega,
\end{eqnarray}
for any $\omega \in C^{\infty}(\Pi E)$. Then using the fact that we have a quasi Q-manifold gives
\begin{equation}
\nonumber Q^{2}=0.
\end{equation}
It is clear that $Q(q)=0$ and thus we have a 1-cocycle.
\end{proof}

\begin{proposition}\label{prop:cocycle}
Let $(\Pi E, Q)$ be a Lie algebroid and let $\phi \in C^{\infty}(\Pi E)$ be an odd 1-cocycle, that is $\Xi(\phi)=1$,  $Q(\phi) =0$ and $\widetilde{\phi}=1$. Then
\begin{equation}\nonumber
\left( \Pi E, \D = Q + \phi \: \Xi, q = \phi \right),
\end{equation}
defines a quasi Q-manifold of weigh one, and thus a Jacobi algebroid.
\end{proposition}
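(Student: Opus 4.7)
The plan is to verify the three conditions that make $(\Pi E, \D, q)$ a weight one quasi Q-manifold: that $\D$ is of weight one, that $\D^{2} = q\D$, and that $\D(q) = 0$. The Jacobi algebroid claim then follows at once from the theorem proved in the previous subsection identifying Jacobi algebroid structures on $E \to M$ with weight one quasi Q-structures on $\Pi E$.

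The weight computation is immediate: $Q$ has weight one by hypothesis (this is exactly what the Lie algebroid structure $(\Pi E, Q)$ encodes), and a short bracket calculation gives $[\Xi, \phi\Xi] = \Xi(\phi)\,\Xi = \phi\,\Xi$, using that the weight one condition on $\phi$ is $\Xi(\phi) = \phi$ (the equality ``$\Xi(\phi)=1$'' in the statement must be read as weight one, not the literal number $1$, since $\Xi$ preserves parity and $\phi$ is odd). Hence $\D = Q + \phi\,\Xi$ has weight one. The invariance condition $\D(q) = 0$ is equally short:
\begin{equation*}
\D(\phi) \;=\; Q(\phi) + \phi\,\Xi(\phi) \;=\; 0 + \phi\cdot\phi \;=\; 0,
\end{equation*}
using the cocycle condition $Q(\phi) = 0$ together with $\phi^{2} = 0$, which is automatic for any odd function on a supermanifold.

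For the main identity $\D^{2} = q\D$ I would apply $\D$ to an arbitrary $f \in C^{\infty}(\Pi E)$ and expand via the odd Leibniz rule $\D(\phi\cdot g) = \D(\phi)\,g - \phi\,\D(g)$:
\begin{equation*}
\D^{2}(f) \;=\; Q^{2}(f) + \phi\,\Xi(Q(f)) + \D(\phi)\,\Xi(f) - \phi\,\D(\Xi(f)).
\end{equation*}
The first term vanishes because $Q^{2}=0$, the third because of the previous step, and $\phi\,\D(\Xi(f)) = \phi\,Q(\Xi(f)) + \phi^{2}\,\Xi^{2}(f) = \phi\,Q(\Xi(f))$. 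Combining,
\begin{equation*}
\D^{2}(f) \;=\; \phi\,\bigl(\Xi Q - Q\Xi\bigr)(f) \;=\; \phi\,[\Xi,Q](f) \;=\; \phi\,Q(f),
\end{equation*}
where the last step uses that $Q$ has weight one, i.e.\ $[\Xi,Q] = Q$. On the other side, $q\,\D(f) = \phi\bigl(Q(f) + \phi\,\Xi(f)\bigr) = \phi\,Q(f)$, again by $\phi^{2}=0$. Hence $\D^{2} = q\D$ as required.

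The only real subtlety is the sign bookkeeping in the odd Leibniz rule together with the correct reading of $\Xi(\phi) = \phi$ as the weight one condition; once those are handled, the mixed terms either cancel via $\phi^{2} = 0$ or reorganise into the commutator $[\Xi,Q]$, which the weight one hypothesis on $Q$ reduces to $Q$ itself. This is really the single non-trivial observation in the proof.
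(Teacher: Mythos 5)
Your proposal is correct and follows essentially the same route as the paper: expand $\D^{2} = (Q+\phi\Xi)^{2}$, kill the $Q^{2}$ term by the homological condition, use $Q(\phi)=0$ and $\phi^{2}=0$ to reduce the cross terms to $\phi[\Xi,Q]$, and invoke the weight one condition $[\Xi,Q]=Q$ to conclude $\D^{2}=\phi\D = q\D$. Your explicit remarks that $\Xi(\phi)=1$ must be read as the weight one condition $\Xi(\phi)=\phi$, and that $\D(q)=0$ needs $\phi^{2}=0$ in addition to the cocycle condition, are correct elaborations of steps the paper leaves implicit.
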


\begin{proof}
The conditions on the weights is clear. Then via calculation we obtain
\begin{eqnarray}
\nonumber \D^{2}\omega &=& Q^{2}\omega + \phi \:\Xi \left(\phi \:\Xi( \omega ) \right) + Q \left( \phi\: \Xi( \omega)\right) + \phi \:\Xi \left( Q\omega \right)\\
\nonumber &=& \phi [\Xi,Q]\omega  = \phi \left(Q + \phi \: \Xi  \right)\omega  = \phi \D \omega.
\end{eqnarray}
The 1-cocycle condition implies $\D(\phi)=0$.
\end{proof}

\begin{remark}
The above proposition partially generalises to higher order odd cocycles, one loses the homogeneity in weight of the quasi Q-manifold structure. Thus Lie algebroids in the presence of higher cocycles cannot directly be associated with Jacobi algebroids.
\end{remark}

\begin{theorem}(\textbf{Grabowski--Marmo \cite{Grabowski2001}})
There is a one-to-one correspondence between Jacobi algebroids and Lie algebroids in the presence of an odd 1-cocycle.
\end{theorem}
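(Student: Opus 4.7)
The plan is to observe that Propositions~\ref{prop:quasiQalgebroid} and \ref{prop:cocycle} already supply the two maps of the desired bijection, so all that remains is to check that these maps are mutually inverse. I would set up the two assignments as follows: starting from a Jacobi algebroid, encoded as a weight one quasi Q-manifold $(\Pi E, \D, q)$, the first proposition produces the pair $\Phi(\D,q) := (Q,\phi)$ with $Q = \D - q\,\Xi$ and $\phi = q$, which is a weight one Lie algebroid on $E \to M$ together with an odd 1-cocycle. Conversely, starting from a Lie algebroid with 1-cocycle $(Q,\phi)$, the second proposition produces $\Psi(Q,\phi) := (\D,q)$ with $\D = Q + \phi\,\Xi$ and $q = \phi$.

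The heart of the proof is then a one-line verification in each direction. Composing $\Psi \circ \Phi$ on $(\D,q)$ gives almost homological vector field $(\D - q\,\Xi) + q\,\Xi = \D$ and curving function $q$, so $\Psi \circ \Phi = \mathrm{id}$. Composing $\Phi \circ \Psi$ on $(Q,\phi)$ gives homological vector field $(Q + \phi\,\Xi) - \phi\,\Xi = Q$ and cocycle $\phi$, so $\Phi \circ \Psi = \mathrm{id}$. Both compositions are thus identities on the nose, since $\Xi$ cancels cleanly without needing to invoke any further structural identity.

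The only subtlety, which I would address in the exposition rather than a separate calculation, is to confirm that the two assignments land in the correct categories of objects: namely that $Q$ produced by $\Phi$ is genuinely of weight one (which follows from $\bar\w(\D)=\bar\w(q\,\Xi)=1$) and that $\D$ produced by $\Psi$ is genuinely an almost homological vector field of weight one. These weight statements are already recorded in the two preceding propositions, and the homological/quasi-homological conditions themselves were also established there, so no new computation is needed.

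I expect no real obstacle: the content of the theorem is essentially bookkeeping between the two propositions, and the nontrivial analytic work, namely the verification that $[\D,\D]=2q\D$ is equivalent to $Q^{2}=0$ together with $Q(\phi)=0$, has already been carried out in Propositions~\ref{prop:quasiQalgebroid} and \ref{prop:cocycle}. The proof is therefore just the explicit display of the inverse maps described above, accompanied by a remark that the correspondence is compatible with the natural notions of morphism on each side, which again reduces to the identities $\D = Q + \phi\,\Xi$ and $Q = \D - q\,\Xi$.
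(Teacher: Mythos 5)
Your proposal is correct and matches the paper's intent exactly: the theorem is presented as an immediate consequence of Propositions~\ref{prop:quasiQalgebroid} and \ref{prop:cocycle}, with the bijection given by $Q = \D - q\,\Xi$ and $\D = Q + \phi\,\Xi$, whose mutual inverseness is the trivial cancellation you display. Your added remarks on weights and on where the nontrivial verifications live are consistent with what those propositions already establish, so nothing is missing.
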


We must again  remark that everything here is done in the category of supermanifolds and that we have both Grassmann even and odd cocycles. For the classical case where $E \rightarrow M$ is  in the category of pure even classical manifolds 1-cocycles are necessarily odd. Thus the above propositions and theorem include the classical structures.\\

For clarity let us examine the association of a Lie algebroid in the presence of a 1-cocycle with a Jacobi algebroid in natural local  coordinates. It is not hard to see that given $\D$ and $q$ we have

\begin{eqnarray}
Q &=& \xi^{\alpha} Q_{\alpha}^{A} \frac{\partial}{\partial x^{A}} + \frac{1}{2}\left(\xi^{\alpha}\xi^{\beta} Q_{\beta \alpha}^{\gamma} +  (-1)^{\widetilde{\alpha}} 2 \xi^{\alpha}Q_{\alpha} \xi^{\gamma}  \right)\frac{\partial}{\partial \xi^{\gamma}},\\
\nonumber \phi &=& (-1)^{\widetilde{\alpha} +1}\xi^{\alpha}Q_{\alpha}.
\end{eqnarray}

By careful symmetrisation we see that building the Lie algebroid structure on $\Pi E$ associated with a Jacobi algebroid is essentially described by the replacement
\begin{equation}
\nonumber \D  \longrightarrow Q,
\end{equation}
viz
\begin{equation}
\nonumber Q_{\beta \alpha}^{\gamma} \longrightarrow Q_{\beta \alpha}^{\gamma} - (-1)^{\widetilde{\alpha}+ \widetilde{\beta}}\left(\delta_{\alpha}^{\:\: \gamma}Q_{\beta} + (-1)^{(\widetilde{\alpha}+1)(\widetilde{\beta}+1)} Q_{\alpha}\delta_{\beta}^{\:\: \gamma}  \right).
\end{equation}
One can then more-or-less read off the Lie bracket on the sections of $E$ and the anchor map $a : \Gamma(E) \rightarrow \Vect(M)$.  Picking a basis of sections $(s_{\alpha})$ for $\Gamma(E)$ and being intentionally slack with the signs we have

\begin{eqnarray}
\nonumber [s_{\alpha}, s_{\beta}] &=& \pm Q_{\alpha \beta}^{\gamma} s_{\gamma} \pm Q_{\alpha}s_{\beta} \pm s_{\alpha}Q_{\beta},\\
\nonumber a(s_{\alpha}) &=& \pm Q_{\alpha}^{A}\frac{\partial}{\partial x^{A}}.
\end{eqnarray}

Dual to this one can consider the associated Schouten structure which is given by

\begin{equation}
\bar{S} = (-1)^{\widetilde{\alpha}} \pi^{\alpha}Q_{\alpha}^{A}p_{A} + \frac{1}{2}\left( (-1)^{\widetilde{\alpha}+ \widetilde{\beta}}\pi^{\alpha}\pi^{\beta}Q_{\beta \alpha}^{\gamma} + (-1)^{\widetilde{\gamma}}2 \pi^{\alpha} Q_{\alpha}\pi^{\gamma} \right)\eta_{\gamma}.
\end{equation}

Similarly, the 1-cocycle becomes $\bar{\phi} = - \pi^{\alpha}Q_{\alpha}$ and it is not hard to see that

\begin{eqnarray}
\nonumber \{\bar{S}, \bar{S}  \}_{T^{*}(\Pi E^{*})} &=& 0,\\
\nonumber \{ \bar{S} , \bar{\phi} \}_{T^{*}(\Pi E^{*})} &=& 0.
\end{eqnarray}

\begin{corollary}\label{cor:cocyle}
If $\mathfrak{g}$ is a Lie algebra with a distinguished odd 1-cocycle\newline  $\phi \in C^{\infty}(\Pi \mathfrak{g})$ then $\Pi \mathfrak{g}^{*}$ is a (formal) odd Jacobi manifold.
\end{corollary}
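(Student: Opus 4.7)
The plan is to view this corollary as a direct specialisation of the Jacobi-algebroid machinery developed in the preceding subsection, applied to the degenerate case where the base manifold is a point. The observation is that a Lie algebra $\mathfrak{g}$ is nothing but a Lie algebroid over the one-point manifold $\{\mathrm{pt}\}$, and accordingly the parity-shifted total space $\Pi \mathfrak{g}$ inherits the weight one homological vector field $Q_{\mathfrak{g}}$ described in the example of Lie--Schouten structures. Thus $(\Pi \mathfrak{g}, Q_{\mathfrak{g}})$ is a genuine weight one Q-manifold realising the Lie algebroid structure on $\mathfrak{g} \to \{\mathrm{pt}\}$.

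Next I would feed this Q-manifold, together with the given odd $1$-cocycle $\phi \in C^{\infty}(\Pi \mathfrak{g})$, into Proposition \ref{prop:cocycle}. That proposition produces from any Lie algebroid plus an odd $1$-cocycle the weight one quasi Q-manifold
\begin{equation}
\nonumber \bigl( \Pi \mathfrak{g}, \; \D = Q_{\mathfrak{g}} + \phi\, \Xi, \; q = \phi \bigr),
\end{equation}
where $\Xi$ is the Euler vector field on $\Pi \mathfrak{g}$. By the definition of Jacobi algebroid given in Section \ref{sec:main constructions}, a weight one quasi Q-manifold structure on $\Pi \mathfrak{g}$ is precisely a Jacobi algebroid structure on the vector bundle $\mathfrak{g} \to \{\mathrm{pt}\}$.

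Finally, by the very definition of a Jacobi algebroid, the existence of such a structure is equivalent to a weight minus one odd Jacobi structure on the total space of the dual $\Pi \mathfrak{g}^{*}$. This is exactly what is claimed. To make the correspondence explicit in this setting one would transport $(\D,q)$ across the symplectomorphism $R \colon T^{*}(\Pi \mathfrak{g}^{*}) \to T^{*}(\Pi \mathfrak{g})$ described in the appendix; the image is a pair $(S,\Q) \in C^{\infty}(T^{*}(\Pi \mathfrak{g}^{*}))$ satisfying the three defining odd Jacobi conditions, and in the linear coordinates $(\eta_{\alpha}, \pi^{\alpha})$ these read
\begin{equation}
\nonumber S = \tfrac{1}{2}(-1)^{\widetilde{\alpha}+\widetilde{\beta}} \pi^{\alpha}\pi^{\beta} Q^{\gamma}_{\beta\alpha}\eta_{\gamma}, \qquad \Q = \pi^{\alpha}Q_{\alpha},
\end{equation}
with the $Q_{\alpha}$ the components of $\phi$. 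The qualifier \emph{formal} simply records that $\mathfrak{g}^{*}$ is a vector space rather than a classical manifold, so one works with the formal algebra of polynomial or power-series functions on $\Pi \mathfrak{g}^{*}$.

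There is essentially no obstacle here beyond bookkeeping: the entire content of the corollary is the functoriality of the constructions of Section \ref{sec:main constructions} under passage to a point base. The only place where care is needed is tracking the sign and weight conventions through Proposition \ref{prop:cocycle} and the symplectomorphism $R$, to verify that the cocycle condition $Q_{\mathfrak{g}}(\phi)=0$ on $\Pi \mathfrak{g}$ delivers the invariance relation $\{\Q,S\}=0$ on $\Pi \mathfrak{g}^{*}$; but this is forced, not computed, by Proposition \ref{prop:cocycle} and the equivalence proven in the preceding theorem.
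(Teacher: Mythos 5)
Your route is the paper's route: treat $\mathfrak{g}$ as a Lie algebroid over a point, feed $(\Pi\mathfrak{g}, Q_{\mathfrak{g}})$ and the cocycle $\phi$ into Proposition \ref{prop:cocycle} to obtain the weight one quasi Q-manifold $(\Pi\mathfrak{g}, \D = Q_{\mathfrak{g}} + \phi\,\Xi, q = \phi)$, and then invoke the equivalence between weight one quasi Q-structures on $\Pi E$ and weight minus one odd Jacobi structures on $\Pi E^{*}$. That skeleton is correct and is exactly how the corollary follows in the text.

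However, your explicit coordinate expression for $S$ is wrong, and wrong in a way that matters. What gets transported across $R$ is the symbol of $\D = Q_{\mathfrak{g}} + \phi\,\Xi$, not of $Q_{\mathfrak{g}}$ alone, so the almost Schouten structure must carry a second term coming from $\phi\,\Xi$; the paper records it as
\begin{equation}
\nonumber S = (-1)^{\widetilde{\alpha}+\widetilde{\beta}}\tfrac{1}{2}\,\pi^{\alpha}\pi^{\beta}Q^{\gamma}_{\beta\alpha}\eta_{\gamma} \;+\; (-1)^{\widetilde{\gamma}}\pi^{\alpha}Q_{\alpha}\,\pi^{\gamma}\eta_{\gamma}, \qquad \Q = \pi^{\alpha}Q_{\alpha}.
\end{equation}
The pair you display, with only the first (Lie--Schouten) term, is \emph{not} an odd Jacobi structure: by the Jacobi identity for $\mathfrak{g}$ that term Poisson self-commutes, $\{S,S\}=0$, while $-2\Q S$ is a generically nonzero cubic in $(\pi,\eta)$, so the compatibility condition $\{S,S\} = -2\Q S$ fails (the cocycle condition only buys you $\{\Q,S\}=0$). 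Since the rest of your argument delegates the verification to Proposition \ref{prop:cocycle} and the theorem, the corollary itself still stands, but the ``explicit correspondence'' you offer as confirmation does not satisfy the axioms you claim it does; the missing term is precisely the image of the deformation $\phi\,\Xi$ under the symbol map and the symplectomorphism $R$.
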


In local coordinates we have $Q = \frac{1}{2}\xi^{\alpha}\xi^{\beta}Q_{\beta \alpha}^{\gamma} \frac{\partial}{\partial \xi^{\gamma}} \in \Vect(\Pi \mathfrak{g})$ which \emph{encodes}   the Lie algebra structure on $\mathfrak{g}$. The 1-cocycle is given by $\phi = (-1)^{\widetilde{\alpha}} \xi^{\alpha} Q_{\alpha}$, the sign is picked for convenience. Then the odd Jacobi structure on $\Pi \mathfrak{g}^{*}$ is given by

\begin{eqnarray}
\nonumber S &=& (-1)^{\widetilde{\alpha}+ \widetilde{\beta}}\frac{1}{2} \pi^{\alpha}\pi^{\beta}Q_{\beta \alpha}^{\gamma}\eta_{\gamma}+ (-1)^{\widetilde{\gamma}}\pi^{\alpha}Q_{\alpha} \pi^{\gamma} \eta_{\gamma},\\
\nonumber \Q &=& \pi^{\alpha}Q_{\alpha}.
\end{eqnarray}

The associated odd Jacobi brackets should be thought of  generalisation of the ``Lie--Schouten" bracket on $\Pi \mathfrak{g}^{*}$  \cite{Voronov:2001qf} in the presence of a 1-cocycle. Both these odd brackets are then considered as odd generalisations of the ``Lie--Poisson--Berezin--Kirillov" bracket on $\mathfrak{g}^{*}$.

\begin{corollary}\label{cor:2}
If $(\Pi E, Q)$ is a Lie algebroid, then $\Pi E^{*} \times \mathbbmss{R}^{0|1}$ is a Jacobi algebroid.
\end{corollary}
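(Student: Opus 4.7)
The plan is to reduce the claim to the equivalence between Jacobi algebroids, weight-one quasi Q-manifolds, and Lie algebroids carrying an odd 1-cocycle, all of which has been established earlier in the section.

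First, I would make the key geometric identification. Since $\mathbbmss{R}^{0|1} = \Pi \mathbbmss{R}$, we have $\Pi E^{*} \times \mathbbmss{R}^{0|1} = \Pi(E \oplus \mathbbmss{R}_{M})^{*}$, where $\mathbbmss{R}_{M}$ denotes the trivial even line bundle $M \times \mathbbmss{R} \to M$. Setting $F := E \oplus \mathbbmss{R}_{M}$, the corollary is equivalent to exhibiting a Jacobi algebroid structure on the vector bundle $F \to M$. By the main theorem of Section \ref{sec:main constructions} together with Proposition \ref{prop:cocycle}, this in turn amounts to producing a weight-one homological vector field (a Lie algebroid structure) on $\Pi F = \Pi E \times \mathbbmss{R}^{0|1}$ together with an odd 1-cocycle.

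Second, I would construct these data by the obvious trivial extension. Introduce the additional odd fibre coordinate $\tau$ on the $\mathbbmss{R}^{0|1}$ factor, declared to have weight one with respect to the enlarged Euler vector field $\Xi' = \Xi + \tau \frac{\partial}{\partial \tau}$. Lift $Q \in \Vect(\Pi E)$ to $\hat{Q} \in \Vect(\Pi F)$ by letting it act as $Q$ on the coordinates $(x^{A}, \xi^{\alpha})$ and by zero on $\tau$. Then $\hat{Q}^{2} = Q^{2} = 0$ and $[\Xi', \hat{Q}] = [\Xi, Q] = \hat{Q}$, so $\hat{Q}$ is a weight-one homological vector field on $\Pi F$; equivalently $F$ carries a Lie algebroid structure. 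Take $\phi := \tau$ as the candidate cocycle: it is odd, satisfies $\Xi'(\phi) = \phi$, and $\hat{Q}(\phi) = 0$ by construction.

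An appeal to Proposition \ref{prop:cocycle} then produces the weight-one quasi Q-manifold $(\Pi F, \D = \hat{Q} + \tau \Xi', q = \tau)$, which by the main theorem of this section is precisely the data of a Jacobi algebroid on $F$, hence on $\Pi E^{*} \times \mathbbmss{R}^{0|1}$. There is no serious obstacle here: everything reduces to a direct check once the identification $\Pi E^{*} \times \mathbbmss{R}^{0|1} = \Pi(E \oplus \mathbbmss{R}_{M})^{*}$ is made and the trivial line-bundle direction is recognised as furnishing the cocycle $\tau$.
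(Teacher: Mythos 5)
Your argument is correct and is exactly the route the paper intends: identify $\Pi E^{*}\times\mathbbmss{R}^{0|1}$ with $\Pi(E\oplus\mathbbmss{R}_{M})^{*}$, extend $Q$ trivially to $\Pi E\times\mathbbmss{R}^{0|1}$, and feed the odd weight-one coordinate $\tau$ into Proposition \ref{prop:cocycle} as the 1-cocycle; transporting the resulting quasi Q-structure back through the double vector bundle morphism reproduces the paper's explicit formulas $S=(-1)^{\widetilde{\alpha}}\pi^{\alpha}Q_{\alpha}^{A}p_{A}+(-1)^{\widetilde{\alpha}+\widetilde{\beta}}\tfrac{1}{2}\pi^{\alpha}\pi^{\beta}Q_{\beta\alpha}^{\gamma}\eta_{\gamma}+\pi\pi^{\alpha}\eta_{\alpha}$ and $\Q=-\pi$. (Your reading $\Xi'(\phi)=\phi$ of the cocycle weight condition is the correct interpretation of the paper's statement.)
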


Let us employ natural local coordinates on $T^{*}(\Pi E^{*} \times \mathbbmss{R}^{0|1})$ which we denote as $(x^{A}, \eta_{\alpha}, \tau, p_{A}, \pi^{\alpha}, \pi)$.  The weight assigned to these extra coordinates is $\w(\tau) =1$ and $\w(\pi)=-1$.  In these local coordinates the weight minus one Jacobi structure is given by

\begin{eqnarray}
\nonumber S&=& (-1)^{\widetilde{\alpha}}\pi^{\alpha}Q_{\alpha}^{A} p_{A} + (-1)^{\widetilde{\alpha} + \widetilde{\beta}} \frac{1}{2}\pi^{\alpha}\pi^{\beta} Q_{\beta \alpha}^{\gamma}\eta_{\gamma} + \pi \pi^{\alpha}\eta_{\alpha},\\
\nonumber \mathcal{Q} &=& - \pi.
\end{eqnarray}

\noindent \textbf{Statement:} extending the fibres of the vector bundle $E \rightarrow M$ underlying a Lie algebroid by $\mathbbmss{R}$ allows one to directly construct a Jacobi algebroid structure on $\Pi E^{*} \times \mathbbmss{R}^{0|1}$.\\

Naturally the proceeding corollary includes Lie algebra as Lie algebroids over a point. Then, if $\mathfrak{g}$ is a Lie algebra one can \emph{extend} the vector space structure to $\mathfrak{g} \times \mathbbmss{R}$. Directly associated with this is the (formal) manifold $\Pi (\mathfrak{g}^{*} \times \mathbbmss{R})$ which comes with an odd Jacobi structure of weight minus one.

\begin{corollary}\label{cor:flatconnection}
Let $M$ be a manifold and $\mathbbmss{A}$ be a closed, odd one-form  (a flat Abelian connection). Then $\Pi TM$ can be made into quasi Q-manifold of weight one, or in other words, $\Pi T^{*}M$ can be considered as a Jacobi algebroid.
\end{corollary}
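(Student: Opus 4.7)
The plan is to recognise the corollary as a direct application of Proposition \ref{prop:cocycle} to the canonical Lie algebroid structure on the tangent bundle. Recall that $TM \to M$ carries a tautological Lie algebroid structure in which the bracket is the usual Lie bracket of vector fields and the anchor is the identity. In the supergeometric formulation this Lie algebroid is encoded as the Q-manifold $(\Pi TM, d)$, where $d = dx^{A}\frac{\partial}{\partial x^{A}}$ is the de Rham differential, a homological vector field of weight one with respect to the Euler field $\Xi = dx^{A}\frac{\partial}{\partial dx^{A}}$.

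Next I would interpret the datum $\mathbb{A}$ correctly. A differential one-form on $M$ is the same thing as a function on $\Pi TM$ that is linear in the fibre coordinates, so in particular $\Xi(\mathbb{A}) = \mathbb{A}$ and $\mathbb{A}$ has weight one. The Grassmann parity assumption is the standing hypothesis $\widetilde{\mathbb{A}} = 1$, and the closedness hypothesis translates, under the identification of $d$ with the de Rham differential, into $d(\mathbb{A}) = 0$. Thus $\mathbb{A}$ is precisely an odd $1$-cocycle for the Lie algebroid $(\Pi TM, d)$ in the sense of Proposition \ref{prop:cocycle}.

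Applying that proposition verbatim with $(\Pi E, Q) = (\Pi TM, d)$ and $\phi = \mathbb{A}$ gives
\begin{equation}
\nonumber \bigl(\Pi TM,\; \D = d + \mathbb{A}\,\Xi,\; q = \mathbb{A}\bigr),
\end{equation}
as a weight-one quasi Q-manifold: the relation $[\D,\D] = 2q\D$ and the condition $\D[q] = 0$ are exactly the content of the proof of Proposition \ref{prop:cocycle}. Invoking the equivalence theorem between weight-one quasi Q-manifolds on $\Pi E$ and Jacobi algebroids on $E \to M$, one then transports this structure through the canonical double vector bundle symplectomorphism $R : T^{*}(\Pi E^{*}) \to T^{*}(\Pi E)$ to obtain a weight minus one odd Jacobi structure on the total space $\Pi E^{*} = \Pi T^{*}M$, i.e.\ a Jacobi algebroid on $TM \to M$.

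There is no serious obstacle here; the work is essentially bookkeeping of weights and parities. The only point requiring a little care is remembering that the underlying vector bundle for the Jacobi algebroid is $E = TM$ (so that $\Pi E = \Pi TM$ carries the quasi Q-structure and $\Pi E^{*} = \Pi T^{*}M$ carries the odd Jacobi bracket), rather than $T^{*}M$ itself, and verifying that the de Rham differential has the correct weight one assignment with respect to the natural Euler field on $\Pi TM$.
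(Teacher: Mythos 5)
Your proposal is correct and matches the paper's intended argument: the corollary is a direct application of Proposition \ref{prop:cocycle} to the canonical Lie algebroid $(\Pi TM, d)$ with the closed odd one-form $\mathbbmss{A}$ playing the role of the odd 1-cocycle, yielding $\D = d + \mathbbmss{A}\,\Xi$ and $q = \mathbbmss{A}$ exactly as in the paper's displayed local formulas, and then passing to $\Pi T^{*}M$ via the equivalence with Jacobi algebroids.
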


In natural local coordinates $(x^{A}, dx^{A})$ on $\Pi TM$, the quasi Q-manifold structure is given by

\begin{eqnarray}
\nonumber \D &=& d + \mathbbmss{A} \:\Xi\\
\nonumber &=& dx^{A} \frac{\partial}{\partial x^{A}} + dx^{B}\mathbbmss{A}_{B} \: dx^{A} \frac{\partial}{\partial dx^{A}},\\
\nonumber q &=& \mathbbmss{A} = dx^{B}\mathbbmss{A}_{B}.
\end{eqnarray}

The weight here is simply assigned as $\w(x^{A})=0$ and $\w(dx^{A})=1$. Picking natural local coordinates $(x^{A}, x^{*}_{A}, p_{A}, p^{*}_{A})$ on $T^{*}(\Pi T^{*}M)$ allows us to write the corresponding odd Jacobi structure on $\Pi T^{*}M$ as

\begin{eqnarray}
\nonumber S &=& (-1)^{\widetilde{A}}p_{*}^{A}p_{A} + (-1)^{\widetilde{B}}p_{*}^{B}\mathbbmss{A}_{B}\: p_{*}^{A}x^{*}_{A},\\
\nonumber \mathcal{Q} &=& - p_{*}^{A}\mathbbmss{A}_{A}.
\end{eqnarray}

Note that the first term of the almost Schouten structure is the canonical Schouten structure on the anticotangent bundle.

\subsection{Schoutenization and Lie algebroids}\label{sec:shoutenisationLieAlgebroids}

In this section we show that given arbitrary Jacobi algebroid one can extend the structure via the  Schoutenisation process described earlier to construct a genuine Lie algebroid. Consider the manifold $T^{*}(\Pi E^{*} \times \mathbbmss{R})$ which we equip with local coordinates $(x^{A}, \eta_{\alpha}, t,p_{A}, \pi^{\alpha}, p)$. The weight we assign as:   \\

\begin{tabular}{|l ||l| }
\hline
$ \w(x^{A}) = 0$  & $\w(p_{A}) = 0$ \\
$ \w(\eta_{a})=1$  & $\w(\pi^{a}) =-1$\\
$\w(t)=0$ &   $\w(p)=0$\\
\hline
\end{tabular}\\

\begin{proposition}
Let $(\Pi E^{*}, S, \mathcal{Q})$ be a Jacobi algebroid. Then $\Pi E^{*} \times \mathbbmss{R}$ is a weight minus one Schouten manifold where the Schouten structure is given by
\begin{equation}
\bar{S} = e^{-t} \left( S - \mathcal{Q}p \right).
\end{equation}
\end{proposition}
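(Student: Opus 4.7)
The plan is to invoke Theorem~\ref{th:schoutenization} directly and then carry out a weight-counting check. A Jacobi algebroid is by definition a weight minus one odd Jacobi structure on $\Pi E^{*}$, so the pair $(S, \mathcal{Q})$ satisfies the hypotheses of Theorem~\ref{th:schoutenization} with $M = \Pi E^{*}$. Applying that theorem produces the function $\bar{S} = e^{-t}(S - \mathcal{Q}p)$ on $T^{*}(\Pi E^{*} \times \mathbbmss{R})$ satisfying $\{\bar{S}, \bar{S}\} = 0$, which is precisely the defining condition for a Schouten structure on $\Pi E^{*} \times \mathbbmss{R}$. The heavy lifting was already done in the proof of Theorem~\ref{th:schoutenization}: the canonical Poisson bracket on $T^{*}(\Pi E^{*} \times \mathbbmss{R})$ splits as $\{\,,\,\}_{T^{*}(\Pi E^{*})} + \{\,,\,\}_{T^{*}\mathbbmss{R}}$ and the three defining conditions of an odd Jacobi structure force all terms in $\{\bar{S}, \bar{S}\}$ to cancel.

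Next I would verify that $\bar{S}$ is homogeneous of weight $-1$, which is what confirms that we recover a \emph{weight minus one} Schouten structure rather than just some Schouten structure. From the prescribed assignments, $\w(\pi^{\alpha}) = -1$, $\w(\eta_{\gamma}) = +1$, $\w(p_{A}) = 0$, so both terms in the local expression for $S$ contribute weight $-1$; likewise $\mathcal{Q} = \pi^{\alpha} Q_{\alpha}$ has weight $-1$. Under $\w(t) = \w(p) = 0$, the factor $e^{-t}$ has weight zero and $\mathcal{Q}p$ inherits weight $-1$. Consequently $\bar{S}$ is homogeneous of weight $-1$, as claimed.

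The one remaining detail is that $\bar{S}$ must be of fibre-wise degree two in the momenta on $T^{*}(\Pi E^{*} \times \mathbbmss{R})$, which is the standard form required of a Schouten structure. This is immediate: $S$ is already quadratic in the momenta $(p_{A}, \pi^{\alpha})$, while $\mathcal{Q} p$ is bilinear in the momenta $\pi^{\alpha}$ and $p$. Thus there is no real obstacle here; the proposition is essentially a graded corollary of Theorem~\ref{th:schoutenization}, with the extra coordinates $(t, p)$ placed in weight zero precisely so as not to disturb the grading.
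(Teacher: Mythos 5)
Your proposal is correct and follows exactly the route the paper takes: the paper's own proof simply states that the result follows directly from Theorem~\ref{th:schoutenization} together with the weight assignments, which is precisely your argument with the weight-counting and the quadratic-in-momenta check written out explicitly. No gaps.
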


\begin{proof}
Follows directly from Theorem \ref{th:schoutenization}.  The assignment of the weight follows directly from the definition.
\end{proof}

In natural local coordinates this Schouten structure is given by

\begin{equation}
\bar{S} = e^{-t} \left((-1)^{\widetilde{\alpha}} \pi^{\alpha}Q_{\alpha}^{A}p_{A} + (-1)^{\widetilde{\alpha}+ \widetilde{\beta}}\frac{1}{2}\pi^{\alpha}\pi^{\beta} Q_{\beta \alpha}^{\gamma}\eta_{\gamma} - \pi^{\alpha}Q_{\alpha}p \right).
\end{equation}

We need to understand the vector bundle structure in order to really identify the Lie algebroid structure. Given the weight assigned to the coordinates on $\Pi E^{*} \times \mathbbmss{R}$ the associate underlying (dual) vector bundle structure is $\proj^{*}E \longrightarrow M \times \mathbbmss{R}$. That is the pullback of $E \rightarrow M$ by   $\proj : M \times \mathbbmss{R} \rightarrow M$.

\begin{corollary}
If  $\Pi E^{*}$ has the structure of a Jacobi algebroid then \newline  $\proj^{*}E \longrightarrow M \times \mathbbmss{R}$ is a Lie algebroid.
\end{corollary}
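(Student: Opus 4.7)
The plan is to chain the preceding proposition together with the characterisation of Lie algebroids as weight minus one Schouten manifolds (item \ref{schouten} of the list recalled in the introduction, due to Vaintrob). The preceding proposition already supplies the Schouten structure $\bar{S} = e^{-t}(S - \mathcal{Q}p)$ on $\Pi E^{*} \times \mathbbmss{R}$, so the task reduces to identifying this manifold as $\Pi F^{*}$ for the bundle $F = \proj^{*}E \to M \times \mathbbmss{R}$, and to confirming that $\bar{S}$ is of weight minus one in the grading native to that bundle structure.

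First I would examine the coordinates and transition rules. From the weight table, the weight zero coordinates on $\Pi E^{*} \times \mathbbmss{R}$ are $(x^{A}, t)$ and the weight $+1$ coordinates are the $\eta_{\alpha}$; the transitions among the $\eta_{\alpha}$ inherited from $\Pi E^{*}$ depend only on the $x^{A}$, not on $t$. Hence, as a graded vector bundle in the grading direction, $\Pi E^{*} \times \mathbbmss{R} \to M \times \mathbbmss{R}$ is naturally the pullback of $\Pi E^{*} \to M$ along $\proj : M \times \mathbbmss{R} \to M$, which is exactly $\Pi(\proj^{*}E)^{*}$.

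Second, I would verify that $\bar{S}$ carries the correct weight. Since $e^{-t}$ is of weight zero, and both $S$ and $\mathcal{Q}p$ are of weight $-1$ (the former by hypothesis of the Jacobi algebroid, the latter because $\mathcal{Q}$ is of weight $-1$ while $p$ is of weight $0$), it follows that $\bar{S}$ is of weight $-1$. Coupled with the already established $\{\bar{S},\bar{S}\}=0$ coming from Theorem \ref{th:schoutenization}, this exhibits $\bar{S}$ as a bona fide weight minus one Schouten structure on $\Pi(\proj^{*}E)^{*}$. Invoking the Vaintrob correspondence then yields the Lie algebroid structure on $\proj^{*}E \to M \times \mathbbmss{R}$.

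The only point deserving any care is the identification of the underlying bundle: one must check that adjoining the even coordinate $t$ produces the pullback bundle rather than some twisted variant. Because the transition rules for the $\eta_{\alpha}$ are independent of $t$ and $t$ itself transforms trivially, this is immediate, and I expect no genuine obstacle beyond this bookkeeping.
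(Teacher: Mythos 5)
Your proposal is correct and follows essentially the same route as the paper: the preceding proposition supplies the weight minus one Schouten structure $\bar{S}=e^{-t}(S-\mathcal{Q}p)$ on $\Pi E^{*}\times\mathbbmss{R}$, the paper likewise identifies the underlying graded bundle as the pullback $\proj^{*}E\to M\times\mathbbmss{R}$ from the assignment of weights, and the conclusion is then the standard equivalence between weight minus one Schouten structures on $\Pi F^{*}$ and Lie algebroid structures on $F$. Your explicit check that the transition functions for the $\eta_{\alpha}$ are $t$-independent is the same bookkeeping the paper leaves implicit.
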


\noindent \textbf{Statement:} given a Jacobi algebroid  structure on $\Pi E^{*}$, one can extend the base space $M$ of the underlying vector bundle $E \longrightarrow M$ by $\mathbbmss{R}$ to directly construct a Lie algebroid.

\subsection{Odd contact manifolds and Jacobi algebroids}\label{sec:odd contact}

In this section we  show that the manifold $M :=  \Pi T^{*}N \times \mathbbmss{R}^{0|1}$ considered as an odd contact manifold provides a canonical example of a Jacobi algebroid than lends itself to the description in terms of odd Jacobi brackets. \\

Let $N$ be a pure even classical manifold of dimension $n$. Consider the manifold $M :=  \Pi T^{*}N \times \mathbbmss{R}^{0|1}$ equipped with natural local coordinates $(x^{a}, x^{*}_{a}, \tau)$. The coordinates $x^{a}$ are even, while the other coordinates $x^{*}_{a}$ and $\tau$ are odd. The dimension of $M$ is $(n|n+1)$. The manifold $M$ comes equipped with an \emph{odd contact one form}, which is the even one form
\begin{equation}
\alpha = d \tau - x^{*}_{a}dx^{a}.
\end{equation}

It was shown in Example \ref{example:oddcontact}. that $M$ is an odd Jacobi manifold with the odd Jacobi structure being

\begin{equation}
S = p^{a}_{*}  \left( p_{a} + x^{*}_{a} \pi\right),  \hspace{30pt} \mathcal{Q} =  - \pi,
\end{equation}
where we have employed natural coordinates $(x^{a}, x^{*}_{a}, \tau, p_{a}, p_{*}^{a} ,\pi)$ on $T^{*}M$. Indeed this odd Jacobi structure is directly equivalent to the odd contact structure. Without details, both the odd contact and odd Jacobi structure on $M$ can be considered as the ``natural superisation" of the classical structures on $ \mathbbmss{R}^{3}$. Note that $\Pi T^{*}N$ comes equipped with a canonical Schouten (odd symplectic) structure, but $\Pi T^{*}N \times \mathbbmss{R}^{0|1}$ comes with a canonical odd Jacobi structure. \\

Let us  attach the weight to the local coordinates on $M$ as:\\

\begin{tabular}{|l ||l| }
\hline
$ \w(x^{a}) = 0$  & $\w(p_{a}) = 0$ \\
$ \w(x^{*}_{a})=1$  & $\w(p_{*}^{a}) =-1$\\
$\w(\tau)=1$ &   $\w(\pi)=-1$\\
\hline
\end{tabular}\\

This weight is the ``natural weight" with respect to the underling  vector bundle structure $T^{*}N \times \mathbbmss{R} \longrightarrow N$. With respect to this weight it is clear that the odd  Jacobi structure on $M := \Pi T^{*}N \times \mathbbmss{R}^{0|1}$ is of weight minus one and we thus have a Jacobi algebroid. \\

Now consider $M^{\star} := \Pi TN \times \mathbbmss{R}^{0|1}$ equipped with natural local coordinates $(x^{a}, \xi^{a}, \eta, p_{a}, \pi_{a}, \theta)$. The canonical double vector bundle morphism $R: T^{*}M \rightarrow T^{*}M^{\star}$ act on the coordinates as $R^{*}(\xi^{a}) = p_{*}^{a}$, $R^{*}(\eta) = \pi$, $R^{*}(\pi_{a})= x^{*}_{a}$ and $R^{*}(\theta)= \tau$. Then we can pull-back the odd Jacobi structure to give

\begin{eqnarray}
\nonumber \hat{S} &=& \xi^{a}(p_{a} + \pi_{a} \eta),\\
\nonumber \hat{\mathcal{Q}} &=& - \eta,
\end{eqnarray}

both of which are now functions on the total space of $M^{\star}$.  Then we can ``undo" the symbol (and after a little reordering) to produce

\begin{eqnarray}
\D &=& \xi^{a}\frac{\partial}{\partial x^{a}} + \eta \xi^{a}\frac{\partial }{\partial \xi^{a}},\\
\nonumber q &=& \eta.
\end{eqnarray}

Direct calculation confirms that $M^{\star} := \Pi TN \times \mathbbmss{R}^{0|1}$ is a quasi Q-manifold.\\

\noindent  \textbf{Statement:}  in light of Proposition \ref{prop:quasiQalgebroid}, $TN \times \mathbbmss{R} \rightarrow N$ is a Lie algebroid in the presence of a 1-cocycle. The de Rham differential on $N$ is the associated homological vector field and the 1-cocycle is identified with the ``odd time".\\

\begin{remark}
As this work was being completed, Mehta \cite{Mehta2011} established a one-to-one correspondence between Jacobi manifolds and degree 1 contact  $NQ$-manifolds. Mehta shows how to interpret the ``Poissonisation" of a Jacobi manifold as the ``symplectification" of the corresponding degree 1 contact $NQ$-manifold. There is no doubt that Mehta's results can be slightly reformulated to sit comfortably with the conventions used here: one would consider ``Shoutenisation" and ``symplectification" of odd contact structures.  This generalises  the  correspondence between Poisson manifolds and degree 1 symplectic $NQ$-manifolds, as established by Roytenberg \cite{Roytenberg:2001}. We also direct the reader to Grabowski \cite{Grabowski2011} who   studies degree 2 contact $NQ$-manifolds as a generalisation of Courant algebroids. The author conjectures that interest in contact structures on super and graded manifolds will continue to grow.
\end{remark}

\section{Concluding remarks}\label{concluding remarks}
In this  paper we defined the notion of an odd Jacobi manifold and examined their basic properties.  In particular,  it was shown that on a supermanifold equipped with an odd Jacobi structure $ J := (S, Q)$ the algebra of smooth functions over the supermanifold $C^{\infty}(M)$ comes with the structure of an odd Jacobi algebra. Furthermore such  the homological vector field $Q$ satisfies a derivation property over the odd Jacobi brackets. \\

However, it remains open as to if interesting or realistic gauge theories exist that require the use of odd Jacobi structures (with $Q\neq 0$) in a generalised Batalin--Vilkovisky formalism. The notion of a gauge system c.f. \cite{Lyakhovich2004} in the context of odd Jacobi manifolds is straight forward.\\

In truth there appears no  applications in physics of even Jacobi structures that cannot simply be restated in terms of Poisson geometry.   That said, it is conceivable that odd contact structures could find quite direct application in  theories with explicit dependency on gauge parameters.\\

The idea of ``odd time" (see Example \ref{example:oddcontact}.) has already been applied in the Batalin--Vilkovisky formalism to get at general and direct solutions of the master equation for a large class of gauge theories, see Dayi   \cite{Dayi1989}. In essence one understands the BRST operator as the partial derivative with respect to the ``odd time" and then  one can formulate the BV formalism in a way akin to classical mechanics. It would be very desirable to properly understand the supergeometry of Dayi's constructions and how this relates  to the work here, in  particular to odd contact manifolds.   The notion of ``odd time" is also essential when constructing flows of odd vector fields. It is certainly expected that odd contact structures are of wider interest than just their relation with odd Jacobi manifolds.  \\

We \emph{defined} Jacobi algebroids  in terms of an odd Jacobi structure on $\Pi E^{*}$ of weight minus one. That is the ``multivector fields" come equipped with an odd Jacobi bracket. For Lie algebroids the bracket between ``multivector fields" is a Schouten bracket, i.e. satisfies a strict Leibniz rule.  \\

This construction was then used to construct a weight one almost homological vector field.  That is the ``differential forms" come equipped with a kind of \emph{deformed} de Rham differential. Importantly we no longer  have a homological vector field as in the case of Lie algebroids, but rather the homological condition is weakened in a very specific way as to provide a quasi Q-manifold structure. As such Jacobi algebroids can be considered as very specific  examples of  \emph{skew algebroids}  \cite{Grabowski1999}, which are a kind of Lie algebroid in which the Jacobi identity is lost. If the corresponding anchor is  a Lie algebra morphism between sections of the vector bundle and vector fields over the base then we have the notion of an \emph{almost Lie algebroid} \cite{Leon2010}. Interest in algebroids without the Jacobi identity comes from nonholonomic mechanics, where skew algebroids provide a general geometric setting.\\

However, via a simple redefinition one can rephrase Jacobi algebroids in terms of Lie algebroids in the presence of a 1-cocycle, which are also known as generalised Lie algebroids. In doing so we recover, maybe up to conventions, the notion of a Jacobi algebroid in the sense of \cite{Grabowski2001,Iglesias2001}. \\

The obvious areas of the present work that require further illumination include:
\begin{itemize}
\item Do Jacobi bialgebroids have an efficient description in terms of a compatible odd Jacobi structure and a quasi Q-structure?
\item Can one define non-linear Jacobi algebroids in terms of odd Jacobi structures over non-negatively graded supermanifolds? Are these naturally related to Voronov's  (\cite{Voronov2010}) non-linear Lie algebroids?
\item Can one develop a theory of higher or homotopy odd Jacobi structures and the related $L_{\infty}$-algebras.
\end{itemize}

\section*{Acknowledgments}
The author would like to thank  Janusz Grabowski, Giuseppe Marmo and Rajan Amit Mehta for their comments and suggestions on  earlier drafts of this work. The author must also than the anonymous referee for their valuable comments.

\newpage
\appendix

\section{ Canonical double vector bundle morphisms}\label{appendix}

For completeness we present the canonical double vector bundle morphisms used in this work. In particular we prove that the morphisms are symplectomorphisms. We describe vector bundles in terms of coordinates on their total spaces and the associated vector bundle automorphisms. Specifically we have:\\

\begin{tabular}{|l||l|}
\hline
$E \longrightarrow M$  &  $E^{*}\longrightarrow M$\\
$(x^{A}, e^{\alpha})  \mapsto (x^{A}) $ & $(x^{A}, e_{\alpha})  \mapsto (x^{A}) $\\
\hline
&\\
$\overline{x}^{A} = \overline{x}^{A}(x)$& $\overline{x}^{A} = \overline{x}^{A}(x)$\\
$\overline{e}^{\alpha} = e^{\beta}T_{\beta}^{\:\: \alpha}(x)$ & $\overline{e}_{\alpha} = \left( T^{-1} \right)_{\alpha}^{\:\: \beta}(x)$\\
\hline
\end{tabular}\\
\vspace{15pt}

Where $T_{\beta}^{\:\: \gamma} \left( T^{-1} \right)_{\gamma}^{\:\: \alpha} = \delta_{\beta}^{\:\: \alpha}$ \emph{etc}. We take  $\widetilde{e^{\alpha}} = \widetilde{e_{\alpha}}= \widetilde{\alpha}$.

Let us employ natural local coordinates:

\vspace{15pt}
\begin{tabular}{|l ||l| }
\hline
$ T^{*}(\Pi E^{*})$ & $(x^{A},\eta_{\alpha}, p_{A}, \pi^{\alpha} )$ \\
$  T^{*}(\Pi E)$  & $(x^{A},\xi^{\alpha}, p_{A}, \pi_{\alpha} )$\\
\hline
\end{tabular}\\

The Grassmann  parities are given by $ \widetilde{x}^{A}= \widetilde{p}_{A}= \widetilde{A}$, $\widetilde{\eta}_{\alpha}= \widetilde{\pi}^{\alpha}= \widetilde{\pi}_{\alpha}= \widetilde{\xi}^{\alpha} =  (\widetilde{\alpha}+1)$. The weights are assigned as $\w(x^{A}) =0$, $\w(\eta_{\alpha}) =1$, $\w(p_{A}) =0$, $\w(\pi^{\alpha}) =-1$ , $\w(\xi^{\alpha}) =-1$, $\w(\pi_{\alpha}) =1$.  The admissible changes of coordinates are:

\vspace{15pt}
\begin{tabular}{|l||l|}
\hline
& \\
$T^{*}(\Pi E^{*})$ &   $\overline{x}^{A}  =  \overline{x}^{A}(x)$, \hspace{5pt} $\overline{\eta}_{\alpha}  =   (T^{-1})_{\alpha}^{\:\: \beta}\eta_{\beta}$,\\
 & $\overline{p}_{A} = \left( \frac{\partial x^{B}}{\partial \overline{x}^{A}} \right)p_{B} + (-1)^{\widetilde{A}(\widetilde{\gamma}+ 1) + \widetilde{\delta}} \pi^{\delta}T_{\delta}^{\:\: \gamma} \left( \frac{\partial (T^{-1})_{\gamma}^{\:\: \alpha}}{\partial \overline{x}^{A}} \right)\eta_{\alpha}$,\\
 & $ \overline{\pi}^{\alpha} = (-1)^{\widetilde{\alpha} + \widetilde{\beta}}\pi^{\beta}T_{\beta}^{\:\: \alpha}$.\\
\hline
&\\
$T^{*}(\Pi E)$ & $ \overline{x}^{A}  =  \overline{x}^{A}(x)$, \hspace{5pt}$\overline{\xi}^{\alpha}  =   \xi^{\beta} T_{\beta}^{\:\: \alpha}$,\\
& $\overline{p}_{A} = \left( \frac{\partial x^{B}}{\partial \overline{x}^{A}} \right)p_{B} + (-1)^{\widetilde{A}(\widetilde{\gamma}+1)} \xi^{\delta}T_{\delta}^{\:\: \gamma} \left(\frac{\partial (T^{-1})_{\gamma}^{\:\: \alpha}}{\partial \overline{x}^{A}}  \right)\pi_{\alpha}$,\\
&  $\overline{\pi}_{\alpha} = (T^{-1})_{\alpha}^{\:\: \beta} \pi_{\beta}$.\\
\hline
\end{tabular}\\

\vspace{15pt}

The canonical double vector bundle morphism $R: T^{*}(\Pi E^{*}) \rightarrow T^{*}(\Pi E )$  are given in local coordinates as

\begin{equation}\nonumber
R^{*}(\pi_{\alpha}) = \eta_{\alpha},  \hspace{35pt} R^{*}(\xi^{\alpha}) = (-1)^{\widetilde{\alpha}}\pi^{\alpha}.
\end{equation}

\noindent\textbf{Lemma A.}\label{lemma1}
The canonical double vector bundle morphism
\begin{equation}\nonumber
R: T^{*}(\Pi E^{*}) \rightarrow T^{*}(\Pi E )
\end{equation}
is a symplectomorphism.

\begin{proof}
The canonical even  symplectic structure on $T^{*}(\Pi E^{*})$ is given by $\omega_{T^{*}(\Pi E^{*})} =  dp_{A}dx^{A} + d\pi^{\alpha}d \eta_{\alpha}$ and on $T^{*}(\Pi E)$ is given by $\omega_{T^{*}(\Pi E)} = dp_{A}dx^{A} + d\pi_{\alpha}d\xi^{\alpha}$. Thus, $R^{*}\omega_{T^{*}(\Pi E )} = \omega_{T^{*}(\Pi E^{*})}$ and we see that $R$ is indeed a symplectomorphism.
\end{proof}

\addcontentsline{toc}{section}{References}

\vfill
\begin{center}Andrew James Bruce\\ \small \emph{Pembrokeshire College},\\
\small \emph{Haverfordwest, Pembrokeshire},\\\small  \emph{SA61 1SZ, UK}\\\small email:\texttt{andrewjamesbruce@googlemail.com}
\end{center}

\end{document}